\newcommand{\N}[0]{	\mathbb{N}}
\newcommand{\seq}[0]{\subseteq}
\newcommand{\pw}{\ensuremath{\mathop{\mathrm{pw}}}}
\newcommand{\nd}{\ensuremath{\mathop{\mathrm{nd}}}}
\newcommand{\vc}{\ensuremath{\mathop{\mathrm{vc}}}}
\newcommand{\tw}{\ensuremath{\mathop{\mathrm{tw}}}}
\newcommand{\fvs}{\ensuremath{\mathop{\mathrm{fvs}}}}
\newcommand{\FPT}{\ensuremath{\mathsf{FPT}}\xspace}
\newcommand{\XP}{\ensuremath{\mathsf{XP}}\xspace}
\newcommand{\W}[1]{\ensuremath{\mathsf{W[#1]}}}
\newcommand{\NP}{{$\mathsf{NP}$}\xspace}
\newcommand{\NE}{{\mathsf{NE}}}
\newcommand{\E}{{\mathsf{E}}}
\newcommand{\FO}{{$\mathsf{FO}$}\xspace}
\newcommand{\MSO}{{$\mathsf{MSO}$}\xspace}
\newcommand{\MSOt}{{$\mathsf{MSO}_2$}\xspace}
\newcommand{\MSOo}{{$\mathsf{MSO}_1$}\xspace}
\newcommand{\bigO}[1]{\ensuremath{O(#1)}}
\newcommand{\smallo}[1]{\ensuremath{o(#1)}}
\let\phi=\varphi
\def\Land{\mathop{\bigwedge\limits}}
\def\Lor{\mathop{\bigvee\limits}}
\newcommand{\mytodo}[2]{\todo[size=\tiny, color=#1!50!white]{#2}\xspace}
\newcommand{\ttcom}[1]{\mytodo{red}{#1}}
\newcommand{\tmcom}[1]{\mytodo{orange}{#1}}
\newcommand{\adj}[2]{\ensuremath{\mathop{\mathrm{adj}}(#1,#2)}}
\newcommand{\inc}[2]{\ensuremath{\mathop{\mathrm{inc}}(#1,#2)}}
\newcommand{\Trule}{\rule{0pt}{3ex}}
\newcommand{\Brule}{\rule[-1.5ex]{0pt}{0pt}}
\newcommand{\prob}[4]{
\begin{definition}[#1]
\vspace{-10pt}
\begin{center}
\begin{tabular} {|llll|}
	\hline
\rule{0pt}{3ex} 
	~&{\bf Input:\enspace}&{\parbox[t]{27em}{#2}}&~\Trule\\
	~&{\bf Question:\enspace}&\parbox[t]{27em}{#3\Brule}&~\\
	\hline
\end{tabular}
\end{center}
\vspace{-3pt}
\end{definition}
}
\begin{document}
\frontmatter          
\pagestyle{headings}  
%
%
%
%
%
\title{Parameterized complexity of fair deletion problems.\thanks{Research was supported by the project GAUK 338216 and by the project SVV-2016-260332.
}}
\titlerunning{Fair deletion problems}  
\author{Tomáš Masařík\inst{1}\thanks{Author was supported by the project CE-ITI P202/12/G061.} \and Tomáš Toufar\inst{2}} 
\authorrunning{Tomáš Masařík and Tomáš Toufar} 
%
\tocauthor{Tomáš Masařík, Tomáš Toufar}
\institute{
Department of Applied Mathematics, Faculty of Mathematics and Physics, Charles University, Prague, Czech Republic, \\
\email{masarik@kam.mff.cuni.cz} \and
Computer Science Institute of Charles University, Faculty of Mathematics and Physics, Charles University, Prague, Czech Republic \\
\email{toufi@iuuk.mff.cuni.cz}.
}
\maketitle              
\begin{abstract}



Deletion problems are those where given a graph $G$ and a graph property $\pi$, the goal is to find a subset of edges such that after its removal the graph $G$ will satisfy the property $\pi$. Typically, we want to minimize the number of elements removed.
In fair deletion problems we change the objective: we minimize the maximum number of deletions in a neighborhood of a single vertex.

We study the parameterized complexity of fair deletion problems with respect to the structural parameters of the  tree-width, the path-width, the size
of a minimum feedback vertex set, the neighborhood diversity, and the size of minimum vertex cover of graph $G$.

We prove the $\W{1}$-hardness of the fair \FO vertex-deletion problem  with respect to the first three parameters combined.
\tmcom{dle mě to chce předělat!!!}Moreover, we show that there is no algorithm for fair \FO vertex-deletion problem running in time $n^{\smallo{\sqrt[3]{k}}}$, where $n$ is the size of the graph and $k$ is the sum
of the first three mentioned parameters, provided that the Exponential Time Hypothesis holds.

On the other hand, we provide an FPT algorithm for the fair \MSO edge-deletion problem parameterized by the size of minimum vertex cover and
an FPT algorithm for the fair \MSO vertex-deletion problem parameterized by the neighborhood diversity.


\end{abstract}
%
	\section{Introduction}
We study the computational complexity of \emph{fair deletion problems}.
Deletion problems are a standard reformulation of some classical problems in combinatorial optimization examined by Yannakakis~\cite{Yannakakis81}. 
For a graph property $\pi$ we can formulate an \emph{edge deletion problem}. That means, given a graph $G=(V,E)$, find the minimum set of edges $F$ that need to be deleted for graph ${G'=(V,E\setminus F})$ to satisfy property $\pi$.
A similar notion holds for the  \emph{vertex deletion problem}.

Many classical problems can be formulated in this way such as {\sc minimum vertex cover, maximum matching} or {\sc minimum feedback arc set}.
For example {\sc minimum vertex cover} is formulated as a vertex deletion problem since we aim to find a minimum set of vertices such that the rest of the graph forms an independent set.
An example of an edge deletion problem is {\sc perfect matching}: we would like to find a minimum edge set such that the resulting graph has all vertices being of degree exactly one. 
Many of such problems are \NP-complete~\cite{Yannakakis78,Watanabe,KriDeo}.


\emph{Fair deletion problems} are such modifications where the cost of the solution should be split such that the cost is not too high for anyone. More formally, the \textsc{fair edge deletion problem} for a given graph $G=(V,E)$ and a property $\pi$ finds a set $F\seq E$ which minimizes the maximum degree of the graph ${G^*=(V,F)}$ where the graph ${G'=(V,E\setminus F)}$ satisfies the property~$\pi$. Fair deletion problems were introduced by Lin and Sahni~\cite{LiSah}.

Minimizing the fair cost arises naturally in many situations, for example in defective coloring~\cite{defcol}. A graph is $(k,d)$-colorable if every vertex can be assigned a color from the set $\{1,\ldots,k\}$ in such a way that every vertex has at most $d$ neighbors of the same color. This problem can be reformulated in terms of fair deletion;
\tmcom{difference between ; and :}
we aim to find a set of edges of maximum degree $d$ such that after its removal the graph can be partitioned into $k$ independent sets.

We focus on fair deletion problems with properties definable in either first order (\FO) or monadic second order (\MSO) logic.
Our work extends the result of Kolman et al.~\cite{Kolman09onfair}. 
They showed an \XP algorithm for a generalization of fair deletion problems definable by \MSOt formula on graphs of bounded tree-width.

We give formal definitions of the problems under consideration in this work.

\prob{\sc Fair \FO edge-deletion}
{An undirected graph $G$, an \FO sentence $\psi$, and a positive integer~$k$.}
{Is there a set $F \subseteq E(G)$ such that $G \setminus F \models \psi$ and for every
vertex $v$ of $G$, the number of edges in $F$ incident with $v$ is at most
$k$?}

Similarly, \textsc{fair vertex deletion problem} finds, for a given graph $G=(V,E)$ and a property $\pi$, the solution which is the minimum of maximum degree of graph ${G[W]}$ where graph ${G[V\setminus W]}$ satisfy property $\pi$. Those problems are \NP-complete for some formulas. For example Lin and Sahni~\cite{LiSah} showed that deciding whether a graph $G$ has a degree one subgraph $H$ such that $G\setminus H$ is a spanning tree is \NP-complete.

\prob{\sc Fair \FO vertex-deletion}
{An undirected graph $G$, an \FO sentence $\psi$, and a positive integer~$k$.}
{Is there a set $W \subseteq V(G)$ such that $G \setminus W \models \psi$ and for every
vertex $v$ of $G$, it holds that $|N(v) \cap W| \leq k$?}

Both problems can be straightforwardly modified for \MSOo or \MSOt.


The following notions are useful when discussing the fair deletion problems.
The \emph{fair cost of a set} $F \subseteq E$ is defined as $\max_{v\in V} |\{ e \in F \mathrel| v \in e \}|$. We refer to the function that assigns each set $F$ its fair cost as the
\emph{fair objective function}. In case of vertex-deletion problems, the \emph{fair cost of a set $W \subseteq V$} is defined as $\max_{v\in V} |N(v) \cap W|$. The \emph{fair objective function} is defined analogously. Whenever we refer to the fair cost or the fair objective function, it should be clear from context whether we mean the edge or the vertex version.

We now describe the generalization of fair deletion problems considered by Kolman et al. The main motivation is that sometimes we want to put additional constraints on the deleted set itself (e.g. \textsc{Connected Vertex Cover}, \textsc{Independent Dominating Set}). However, the framework of deletion problems does not allow that. To overcome this problem, we define the generalized problems as follows.

\prob{\sc Generalized Fair \MSO edge-deletion}
{An undirected graph $G$, an \MSO formula $\psi$ with one free edge-set
variable, and a positive integer $k$.}
{Is there a set $F \subseteq E(G)$ such that $G \models \psi(F)$ and for every
vertex $v$ of $G$, the number of edges in $F$ incident with $v$ is at most
$k$?}

\prob{\sc Generalized Fair \MSO vertex-deletion}
{An undirected graph $G$, an \MSO formula $\psi$ with one free vertex-set variable, and a positive integer $k$.}
{Is there a set $W \subseteq V(G)$ such that $G \models \psi(W)$ and for every vertex $v$ of $G$, it holds that $|N(v) \cap W| \leq k$?}

In this version, the formula $\psi$ can force that $G$ has the desired property after deletion as well as imposing additional constraints on the deleted set itself.

Courcelle and Mosbah~\cite{CourcelleMosbah} introduced a semiring homomorphism framework that can be used
to minimize various functions over all sets satisfying a given \MSO formula. A natural question is
whether this framework can be used to minimize the fair objective function. The answer is no, as we exclude
the possibility of an existence of an \FPT algorithm for parameterization by tree-width under reasonable assumption. Note that there are semirings that capture the fair objective function, but their size is of order $\bigO{n^{\tw{(G)}}}$, so this approach does not lead
to an \FPT algorithm.

\subsection{Our results}
We prove that the \XP algorithm given by Kolman et al.~\cite{Kolman09onfair} is almost optimal under the exponential time hypothesis (ETH) for both the edge and the vertex version. Actually we proved something little bit stronger. We prove the hardness of the classical (weaker) formulation of {\sc fair deletion problems} described in (weaker as well) \FO logic.

\begin{theorem}\label{thm:hardvertex}
If there is an \FPT algorithm for \textsc{Fair \FO vertex-deletion} parameterized by the size of the formula $\psi$,
the pathwidth of $G$, and the size of minimum feedback vertex set of $G$ combined,
then $\FPT = \W{1}$.
Moreover, let $k$ denote $\pw(G)+ \fvs(G)$. If there is an algorithm for
\textsc{Fair \FO vertex-deletion} with running time $f(|\psi|, k) n^{o(\sqrt[3] k)}$,
then Exponential Time Hypothesis fails.
\end{theorem}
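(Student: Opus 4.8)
The plan is to prove both statements at once, via a single parameterized reduction from \textsc{Multicolored Clique} (equivalently \textsc{Partitioned Clique}), which is $\W{1}$-hard when parameterized by the number $t$ of color classes and which, under ETH, admits no algorithm running in time $f(t)\,n^{o(t)}$. Given an instance with color classes $V_1,\dots,V_t$, I would construct in polynomial time a graph $G$, a \emph{fixed} \FO sentence $\psi$ of size $O(1)$, and a budget $k$, so that $G$ admits a set $W$ with $G \setminus W \models \psi$ of fair cost at most $k$ if and only if the input contains a multicolored clique. Using a constant-size sentence of the classical (non-generalized) \FO form is what makes the hardness apply to the weaker problem, as advertised.

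The heart of the construction is to make the \emph{fair objective} itself carry the combinatorial choice. The idea is to attach to each color class a selection gadget in which the interplay between the budget constraint $|N(v)\cap W|\le k$ and the requirement $G\setminus W\models\psi$ forces the deleted set to single out exactly one vertex per class (the chosen clique vertex), together with verification gadgets that $\psi$ can satisfy only when the selections made in every pair of classes correspond to an edge of the input graph. Because \FO can inspect only bounded-radius neighborhoods, the adjacency and consistency checks must be made strictly local, with all the genuine counting delegated to the per-vertex fair budget; arranging this so that one global sentence of constant size suffices is the first delicate point. Correctness then reduces to showing that every budget-respecting deletion decodes to a consistent choice of one vertex per class forming a clique, and conversely that a clique yields an admissible deletion.

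The key quantitative step is the structural analysis of $G$: I would exhibit an explicit path decomposition together with a small feedback vertex set and argue that $\pw(G) = O(t^3)$ and $\fvs(G) = O(t^3)$, while $|\psi| = O(1)$ and $|V(G)| = n^{O(1)}$. This cubic bound is exactly what couples the two conclusions. On one hand it is a function of $t$ alone, so the map is a legitimate parameterized reduction, and an \FPT algorithm in $|\psi|+\pw+\fvs$ would place \textsc{Multicolored Clique} in \FPT, giving $\FPT=\W{1}$. On the other hand, writing $k=\pw(G)+\fvs(G)=O(t^3)$ we have $\sqrt[3]{k}=O(t)$, so an algorithm of running time $f(|\psi|,k)\,n^{o(\sqrt[3]{k})}$ would, after composing with the reduction, solve \textsc{Multicolored Clique} in time $f'(t)\,n^{o(t)}$ (since $N^{o(t)}=n^{o(t)}$ for $N=n^{O(1)}$), contradicting ETH.

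I expect the main obstacle to be the simultaneous control of \emph{both} structural parameters. Encoding a ``select one of $n$ vertices'' choice naturally calls for high-degree selector vertices or large cliques, either of which can blow up pathwidth or destroy the almost-forest structure needed for a small feedback vertex set; reconciling this with a purely local \FO verification is what forces the layered gadgetry, and it is precisely the cost of this layering that inflates $\pw+\fvs$ to order $t^3$ and thereby produces the cube root in the exponent. Verifying that the structural parameters are genuinely $O(t^3)$ — so that $\sqrt[3]{k}$ scales linearly in $t$ and the clique ETH bound transfers to the claimed cube-root form — is the calculation I would carry out most carefully.
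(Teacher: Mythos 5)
There is a genuine gap: your write-up is a proof \emph{plan}, not a proof. The entire technical content of such a hardness result lies in the gadget construction, and you explicitly defer it twice --- arranging ``one global sentence of constant size'' for the local adjacency/consistency checks is flagged as ``the first delicate point,'' and reconciling single-vertex selection with simultaneously small pathwidth and feedback vertex set is flagged as ``the main obstacle'' --- without resolving either. Nothing in the proposal certifies that a budget-respecting deletion decodes to one vertex per class, that an \FO sentence (which sees only bounded-radius neighborhoods) can verify that the $\binom{t}{2}$ cross-class selections are pairwise consistent and correspond to edges, or that the resulting graph really has $\pw+\fvs=O(t^3)$. These are exactly the claims on which the theorem stands or falls, so the argument as given cannot be checked.

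It is worth noting how the paper avoids rebuilding this machinery: it reduces from \textsc{Equitable 3-coloring}, whose $\W{1}$-hardness with respect to $\pw+\fvs$ and whose $f(k)n^{o(\sqrt[3]{k})}$ ETH lower bound are already known (the cube root is inherited from the known \textsc{Multicolored Clique} reduction, not re-derived). The paper's gadget is then very light: three class vertices, one subdivided selector edge per (original vertex, class) pair, and $n$ degree-one dangling vertices per original vertex. The fair budget $n/3$ at the three class vertices is what enforces \emph{equitability} of the color classes (not the selection itself, which is handled by the formula $valid\_deletion$), and the parameters increase only additively: $\fvs(G')\le\fvs(G)+3$ and $\pw(G')\le\pw(G)+4$, so the cube-root bound transfers verbatim. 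If you want to pursue a direct reduction from \textsc{Multicolored Clique}, you would essentially have to reproduce the numerical counting gadgets of the equitable-coloring hardness proof inside your construction; until those gadgets are exhibited and their pathwidth and feedback vertex set are bounded, the claimed $O(t^3)$ and hence the $\sqrt[3]{k}$ exponent are unsupported.
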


\begin{theorem}
\label{thm:edge_deletion_hardness}
If there is an \FPT algorithm for \textsc{Fair \FO edge-deletion} parameterized by the size of the formula $\psi$,
the pathwidth of $G$, and the size of minimum feedback vertex set of $G$ combined,
then $\FPT = \W{1}$.
Moreover, let $k$ denote $\pw(G)+\fvs(G)$.
 If there is an algorithm for
\textsc{Fair \FO edge-deletion} with running time $f(|\psi|, k) n^{o(\sqrt[3] k)}$,
then Exponential Time Hypothesis fails.
\end{theorem}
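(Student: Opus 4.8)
The plan is to give a parameterized reduction from \textsc{Multicolored Clique}, which is \W{1}-hard with respect to the number of colour classes $k$ and, under ETH, admits no algorithm running in time $n^{\smallo{k}}$. The whole theorem then follows from a single construction: if the reduction produces an equivalent instance of \textsc{Fair \FO edge-deletion} in which $|\psi| = \bigO{1}$ and $\pw(G)+\fvs(G) = \bigO{k^3}$, then an \FPT algorithm for the three parameters combined would decide \textsc{Multicolored Clique} in \FPT time (yielding $\FPT = \W{1}$), while an algorithm of running time $f(|\psi|,k')\,n^{\smallo{\sqrt[3]{k'}}}$ with $k' = \pw(G)+\fvs(G) = \bigO{k^3}$ would run in time $n^{\smallo{\sqrt[3]{k^3}}} = n^{\smallo{k}}$ and refute ETH. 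The cube in the statement is therefore exactly the price we pay for a cubic blow-up of the parameter. The construction will closely mirror the one used for the vertex version in Theorem~\ref{thm:hardvertex}, adapted so that selections are carried by edges rather than by vertices.

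First I would build, for each colour class $V_i$, a \emph{selection gadget}: a controller vertex $c_i$ together with a family of edges, one per vertex of $V_i$, whose deletion signals that the corresponding vertex is picked into the clique. The fair-cost bound $b$ (the integer in the instance, which is \emph{not} part of the measured parameter) is tuned so that the only way to respect the per-vertex deletion bound at $c_i$ is to delete exactly the edges corresponding to a single choice; this is how the fair objective enforces the cardinality constraint that \FO alone cannot express. For each pair of classes I would add a \emph{verification gadget}, wired so that a fixed, constant-size \FO sentence $\psi$ can locally certify that the two chosen endpoints form an edge of the input graph, using only the presence or absence of specific edges as the information $\psi$ is allowed to read.

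Next I would bound the structural parameters. The key point is that after removing a set of $\bigO{k^3}$ ``portal'' vertices (the controllers together with the interface vertices of the verification gadgets) the remainder is a forest, giving $\fvs(G) = \bigO{k^3}$; arranging the gadgets along a single line and observing that at any cut only $\bigO{k^3}$ vertices need to be remembered gives the same bound on $\pw(G)$. Since $\psi$ is a fixed sentence independent of the instance, $|\psi| = \bigO{1}$, so all three parameters are bounded by a computable function of $k$, and in fact by a cubic polynomial in $k$, which is precisely what the ETH part requires.

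The main obstacle is twofold. Because we work in \FO rather than \MSO, the formula cannot count, cannot enforce ``exactly one choice'', and cannot test any global property such as connectivity; every global or cardinality requirement must instead be forced by the fair objective function through carefully calibrated degrees and the bound $b$. At the same time the gadgets must be designed so that \emph{both} $\pw(G)$ and $\fvs(G)$, and not merely one of them, stay cubic in $k$, which rules out the usual trick of hanging many long attachments off a single high-degree vertex and forces the selection and verification gadgets to be simultaneously tree-like and linearly arrangeable. Getting the degree bookkeeping of the edge version to realise the same logic as the vertex gadgets of Theorem~\ref{thm:hardvertex}, while respecting both parameters, is the delicate part; the rest is the routine verification that the two instances are equivalent.
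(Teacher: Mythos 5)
Your overall strategy differs from the paper's. The paper does \emph{not} reduce from \textsc{Multicolored Clique} directly; it reduces from \textsc{Equitable 3-coloring}, whose $\W{1}$-hardness and $f(k)n^{o(\sqrt[3]{k})}$ ETH lower bound with respect to $\pw(G)+\fvs(G)$ are already known (Theorem~\ref{thm:eq_col_hardness}). The reduction only adds three class vertices, the class--original edges, and $n/3+1$ pendant vertices per original vertex, so both parameters increase by an additive constant and the cube root is \emph{inherited} from the source problem rather than manufactured by a cubic parameter blow-up. Equitability is enforced by the single fair-cost budget $n/3$ at the three class vertices, and ``at least one class per vertex'' by the formula. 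Your plan --- go directly from \textsc{Multicolored Clique} and pay the $\sqrt[3]{k}$ via an $\bigO{k^3}$ blow-up of $\pw+\fvs$ --- is a legitimate alternative in principle, but it commits you to redoing, inside \FO plus a fair-cost budget, the work that the known hardness proof of \textsc{Equitable coloring} already did.

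That is where the genuine gap lies: the verification gadgets are asserted, not constructed, and they are the entire difficulty. A constant-size \FO sentence cannot count and is local, so the check ``the vertex selected in $V_i$ is adjacent to the vertex selected in $V_j$'' must be realized purely by degree bookkeeping against one uniform budget $b$. You need that same $b$ to (i) cap each controller $c_i$ at one selection (which pushes $b$ down, yet the fair cost only gives upper bounds, so ``at least one selection'' must separately come from $\psi$), (ii) absorb whatever deletions the $\binom{k}{2}$ verification gadgets perform, and (iii) guarantee that the choice recorded at $c_i$ is the same choice that every verification gadget involving class $i$ reads --- a consistency/propagation requirement across $k-1$ gadgets per class that you never address. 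Without an explicit gadget showing how a single integer $b$ and a fixed sentence $\psi$ reconcile these roles (and a padding argument ensuring no other vertex accidentally meets or violates the budget), the claimed equivalence of instances is unsupported; calling the remainder ``routine verification'' understates that this is precisely the hard core of such reductions. The parameter-bounding paragraph is fine as far as it goes, but it bounds the parameters of a construction that has not been specified.
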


By a small modification of our proofs we are able to derive tighter ($\sqrt{k}$ instead of $\sqrt[3]{k}$) results using \MSOt logic or \MSOo logic respectively. However, there is still a small gap that has been left open.

\begin{theorem}\label{thm:MSO_hardvertex}
If there is an \FPT algorithm for \textsc{Fair \MSOo vertex-deletion} parameterized by the size of the formula $\psi$,
the pathwidth of $G$, and the size of minimum feedback vertex set of $G$ combined,
then $\FPT = \W{1}$.
Moreover, let $k$ denote $\pw(G)+ \fvs(G)$. If there is an algorithm for
\textsc{Fair \MSOo vertex-deletion} with running time $f(|\psi|, k) n^{o(\sqrt k)}$,
then Exponential Time Hypothesis fails.
\end{theorem}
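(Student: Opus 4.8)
The plan is to reuse the reduction constructed for Theorem~\ref{thm:hardvertex} almost verbatim and to replace only the part of the encoding that is responsible for the cubic blow-up of the structural parameters. Recall that in the \FO version the source problem (for which the Exponential Time Hypothesis rules out an $n^{o(t)}$ algorithm in its parameter $t$) is encoded into an instance of \textsc{Fair \FO vertex-deletion} whose pathwidth and feedback vertex set are both bounded by a cubic function of $t$; the extra factor of $t$ comes from the auxiliary vertices needed to make the consistency of the guessed solution checkable by a first order sentence of bounded quantifier rank. My first step is therefore to isolate precisely those ``wiring'' gadgets in the \FO construction and to argue that everything else — the skeleton carrying the pathwidth and feedback-vertex-set bounds, together with the use of the fair cost $k$ as a local budget that forces exactly the intended choice at each gadget — can be kept unchanged.

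The second step is to show that in \MSOo these wiring gadgets are unnecessary. Instead of routing the information about each local choice through $\Theta(t)$ auxiliary vertices and verifying global consistency with an \FO sentence, I would let the \MSOo sentence quantify over a vertex set $S$ that directly encodes the guessed certificate (the chosen representatives together with whatever global structure the source problem requires, e.g.\ a partition witness or a connectivity-type condition that \MSOo can express but \FO cannot), and then verify the local compatibility conditions by a formula of constant size. Because the certificate is now guessed by a set quantifier rather than materialised as vertices of the graph, the auxiliary vertices disappear and the construction has pathwidth and feedback vertex set bounded by a quadratic function of $t$, say $\pw(G)+\fvs(G)=O(t^2)$. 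The sentence $\psi$ has size independent of the instance, so the combined parameter $|\psi|+\pw(G)+\fvs(G)$ is a function of $t$ alone; this makes the map an \FPT reduction and yields the claimed $\W{1}$-hardness.

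For the lower bound under the Exponential Time Hypothesis I would combine the quadratic bound $\pw(G)+\fvs(G)=O(t^2)$ with the fact that the source problem admits no $f(t)\,n^{o(t)}$ algorithm: any algorithm solving \textsc{Fair \MSOo vertex-deletion} in time $f(|\psi|,k)\,n^{o(\sqrt k)}$ with $k=\pw(G)+\fvs(G)$ would run in time $f\,n^{o(\sqrt{t^2})}=f\,n^{o(t)}$ on the produced instances, and hence would refute the Exponential Time Hypothesis.

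The main obstacle will be the redesign and correctness analysis of the replaced gadget. I must ensure that the set $S$ guessed by the \MSOo sentence cannot ``cheat'' by encoding a certificate that the original wiring gadgets would have forbidden, and, conversely, that every valid solution of the source instance is still realisable after the auxiliary vertices are removed — in particular that the fair-cost budget continues to enforce exactly the intended local choices once those vertices are gone. Re-establishing the two directions of the equivalence, and recomputing the pathwidth and feedback vertex set of the slimmed-down construction, is where essentially all of the remaining work lies; the reduction itself and the bookkeeping for the Exponential Time Hypothesis are then routine.
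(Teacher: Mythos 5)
Your plan rests on a misdiagnosis of where the cube root in Theorem~\ref{thm:hardvertex} comes from. In the paper's proof of that theorem, the reduction from \textsc{Equitable 3-coloring} to \textsc{Fair \FO vertex-deletion} increases the structural parameters only by additive constants ($\fvs(G') \leq \fvs(G)+3$ and $\pw(G') \leq \pw(G)+4$); there are no ``wiring gadgets'' contributing a cubic blow-up that set quantification could eliminate. The exponent $\sqrt[3]{k}$ is inherited entirely from the upstream lower bound for the source problem itself: the known reduction from \textsc{Multicolored Clique} to \textsc{Equitable coloring} incurs a cubic parameter blow-up (Theorem~\ref{thm:eq_col_hardness}), so no improvement to the downstream reduction, however clever, can beat $n^{o(\sqrt[3]{k})}$ while \textsc{Equitable 3-coloring} remains the source. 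Consequently the step you identify as ``essentially all of the remaining work'' --- slimming down gadgets of the existing construction --- cannot deliver the claimed $\pw(G)+\fvs(G)=O(t^2)$ bound, and you give no other justification for it.

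What the paper actually does is change the source problem: it reduces from \textsc{Equitable Connected Partition}, for which Enciso et al.\ establish \W{1}-hardness via a reduction from \textsc{Multicolored Clique} with only a quadratic parameter blow-up, hence an $n^{o(\sqrt{k})}$ ETH lower bound. The reduction to the fair deletion problem is then the same selector-vertex construction, generalized to $r$ class vertices with fair cost limit $n/r$; the only place the extra power of \MSOo is needed is to express that each class of the encoded partition induces a connected subgraph, which is not \FO-expressible. Your passing mention of ``a connectivity-type condition that \MSOo can express but \FO cannot'' points at the right ingredient, but it must enter as a defining property of a different, quadratically-hard source problem, not as a replacement for auxiliary vertices in the existing reduction. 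Without that switch of source problem the ETH calculation in your third step does not go through.
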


\begin{theorem}
\label{thm:MSO_edge_deletion_hardness}
If there is an \FPT algorithm for \textsc{Fair \MSOt edge-deletion} parameterized by the size of the formula $\psi$,
the pathwidth of $G$, and the size of minimum feedback vertex set of $G$ combined,
then $\FPT = \W{1}$.
Moreover, let $k$ denote $\pw(G)+\fvs(G)$.
 If there is an algorithm for
\textsc{Fair \MSOt edge-deletion} with running time $f(|\psi|, k) n^{o(\sqrt k)}$,
then Exponential Time Hypothesis fails.
\end{theorem}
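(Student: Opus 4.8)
The plan is to obtain Theorem~\ref{thm:MSO_edge_deletion_hardness} as a refinement of the reduction behind Theorem~\ref{thm:edge_deletion_hardness}, exploiting the extra expressive power of \MSOt to absorb into the sentence the work that the \FO reduction had to hard-wire into the graph. Recall the shape of that earlier reduction: starting from a $\W{1}$-hard source problem whose parameter $p$ admits, under ETH, no algorithm of the form $f(p)\,n^{\smallo{p}}$, one builds in polynomial time a graph $G$, a fair budget $k$, and a fixed \FO sentence $\varphi$ so that $G\setminus F\models\varphi$ for some $F\subseteq E(G)$ of fair cost at most $k$ if and only if the source instance is positive. Because an \FO sentence can only certify properties of bounded-radius neighbourhoods, the global condition that the deleted graph must satisfy has to be enforced through explicitly added verification gadgets, one family per pair of encoded coordinates; this extra layer makes $\pw(G)+\fvs(G)$ grow like $p^{3}$, which is precisely what produces the $\smallo{\sqrt[3]{k}}$ exponent.

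First I would keep the \emph{selection} part of the construction untouched: a collection of high-degree hub vertices at which the fair constraint ``at most $k$ incident deleted edges'' together with the sentence pins down a single admissible choice per coordinate. Second, I would replace the \FO verification gadgets by a global requirement stated directly in the sentence. An \MSOt sentence $\psi$ may quantify over edge sets and vertex sets of $G\setminus F$, so it can assert the pairwise-compatibility condition on the remaining graph intrinsically, with a formula of constant size and independent of the instance. With the explicit per-pair gadgets removed, a direct inspection of the simplified graph shows that both a path decomposition and a feedback vertex set of size $\bigO{p^{2}}$ exist; thus $k=\pw(G)+\fvs(G)=\bigO{p^{2}}$ while $|\psi|=\bigO{1}$.

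Given this construction the two conclusions follow by the usual bookkeeping. Since $k$ is bounded by a function of $p$ and $|\psi|$ is constant, the map is an fpt-reduction, so an \FPT algorithm for \textsc{Fair \MSOt edge-deletion} parameterized by $|\psi|+\pw+\fvs$ would place the $\W{1}$-hard source in \FPT, giving $\FPT=\W{1}$. For the sharper claim, $\sqrt{k}=\bigO{p}$, so an algorithm running in $f(|\psi|,k)\,n^{\smallo{\sqrt{k}}}$ would, through the polynomial-time reduction, solve the source in $f\,n^{\smallo{p}}$ time and refute ETH. The main obstacle is the correctness of the compressed sentence: I must verify that the single \MSOt formula, evaluated on $G\setminus F$ with the fair budget in force, captures \emph{exactly} the compatibility condition that the many \FO gadgets previously encoded---neither admitting spurious deletion sets that bypass verification nor rejecting legitimate ones---and, in parallel, certify the claimed $\bigO{p^{2}}$ bounds on pathwidth and feedback vertex set of the simplified graph, since it is the tightness of these bounds that upgrades $\sqrt[3]{k}$ to $\sqrt{k}$.
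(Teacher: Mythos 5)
Your proposal is not yet a proof; it is a plan whose central steps are precisely the ones you defer at the end, and its account of where the exponents $\sqrt[3]{k}$ and $\sqrt{k}$ come from does not match how the argument actually works. In the paper, the reduction from the source problem to \textsc{Fair \FO edge-deletion} changes the parameter only by an additive constant ($\fvs(G')\le\fvs(G)+3$ and $\pw(G')\le\pw(G)+O(1)$); there are no ``per-pair verification gadgets'' inflating $\pw+\fvs$ to $p^{3}$ inside that reduction. The cube root in Theorem~\ref{thm:edge_deletion_hardness} is inherited wholesale from the source problem, \textsc{Equitable 3-coloring}, whose ETH lower bound (Theorem~\ref{thm:eq_col_hardness}) already has the $n^{o(\sqrt[3]{k})}$ form. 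Consequently, the route to $\sqrt{k}$ in Theorem~\ref{thm:MSO_edge_deletion_hardness} is not to compress gadgets of the \FO construction but to \emph{switch the source problem}: the paper reduces from \textsc{Equitable Connected Partition}, which is \W{1}-hard with only a quadratic parameter blowup from \textsc{Multicolored Clique}, and which cannot serve as a source for the \FO version because connectivity of the classes is not \FO-expressible. The role of \MSOt is solely to let the fixed sentence assert, via $class\_set$ and $connected$, that each encoded class induces a connected subgraph; the graph construction (now with $r$ class vertices, fair budget $n/r$, and the dangling-vertex protection of the edge-deletion gadget) is otherwise identical to the \FO one and still adds only a constant (respectively $O(r)$) to the parameters.

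Concretely, the gaps you would need to fill are: (i) you never name a source problem carrying the required $n^{o(\sqrt{p})}$-type lower bound, nor exhibit the ``compatibility condition'' your constant-size \MSOt sentence is supposed to capture; (ii) the claimed $O(p^{2})$ bounds on pathwidth and feedback vertex set of your ``simplified graph'' are asserted rather than proved, and you yourself identify them as the crux; (iii) removing verification gadgets changes the graph on which the fair-cost constraint acts, so you would also have to re-establish that the budget still pins down exactly one admissible choice per coordinate, a correctness argument you do not attempt. The shortest repair is the paper's own: take the edge-deletion reduction sketched for Theorem~\ref{thm:edge_deletion_hardness}, generalize it to $r$ classes with fair cost limit $n/r$, append the \MSOt-expressible connectivity requirement $eq\_conn$, and invoke the known hardness of \textsc{Equitable Connected Partition}.
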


On the other hand we show some positive algorithmic results for the generalized version of the problems.

\begin{theorem}
	\label{thm:FPTneighbordiversity}
	\textsc{Generalized Fair \MSOo vertex-deletion} is in \FPT with respect to the neighborhood diversity $\nd(G)$ and the size of the formula $\psi$.
\end{theorem}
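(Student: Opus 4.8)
The plan is to exploit the rigid structure of graphs of bounded neighborhood diversity. Recall that if $\nd(G) = t$ then $V(G)$ partitions into $t$ \emph{types} $V_1, \dots, V_t$, computable in polynomial time, where each $V_i$ is a clique or an independent set, any two vertices of one type are twins, and between two types either all or no edges are present. The first consequence is that the fair cost of a candidate set $W$ is determined by its \emph{profile} $(a_1, \dots, a_t) = (|W \cap V_1|, \dots, |W \cap V_t|)$: for $v \in V_j$ the quantity $|N(v) \cap W|$ equals $\sum_{i \ne j,\, V_i \sim V_j} a_i$ plus a term arising from $V_j$ itself that is nonzero only if $V_j$ is a clique, and this expression is non-decreasing in every coordinate $a_i$. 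Hence the fair cost is a polynomial-time computable, coordinatewise monotone function of the profile.

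The second consequence concerns the formula. Colour $G$ by membership in $W$; the resulting structure $(G, W)$ has neighborhood diversity at most $2t$, its twin-classes being the sets $W \cap V_i$ and $V_i \setminus W$. By the type-counting property of \MSOo over graphs of bounded neighborhood diversity (Lampis), there is a threshold $N$ bounded by a function of $|\psi|$ such that the truth of $G \models \psi(W)$ depends only on the \emph{type} of $W$, namely the vector recording for each $i$ the pair $\bigl(\min(a_i, N),\, \min(|V_i| - a_i, N)\bigr)$. Two sets of the same type are indistinguishable because the duplicator in the Hintikka game can mirror every quantifier move inside a twin-class as long as enough marked and enough unmarked vertices remain available. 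The number of types is at most $(N+1)^{2t}$, a function of $|\psi|$ and $\nd(G)$ alone.

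Combining the two, I would enumerate all types. For a fixed type, each coordinate $a_i$ is either forced to a single value (whenever $\min(a_i, N)$ or $\min(|V_i| - a_i, N)$ is below $N$) or free to range over an interval $[N, |V_i| - N]$ (when both caps equal $N$). Since every neighbor count is coordinatewise monotone in the profile, the minimum fair cost realizing a given type is attained by setting each free coordinate to its least value $N$; this choice simultaneously minimizes all neighbor counts, hence their maximum. Thus for each type one obtains in polynomial time a concrete profile of minimum fair cost, checks whether this cost is at most $k$, and checks whether the type is satisfying by testing $G \models \psi(W)$ for one representative set $W$ realizing the type (equivalently, by \MSOo model checking on the $2t$-coloured structure, which is fixed-parameter tractable by Lampis). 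We answer \emph{yes} iff some type is both satisfying and has minimum fair cost at most $k$.

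The routine part is the enumeration and the fair-cost arithmetic; the step demanding care is the type-counting claim, where one must verify that a threshold $N$ depending only on $|\psi|$ captures \MSOo-indistinguishability on both the marked and the unmarked side of every twin-class, and then confirm that restricting attention to the two-sided capped type is exactly what makes the monotone minimization over the free coordinates legitimate.
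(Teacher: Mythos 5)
Your proposal is correct and follows essentially the same route as the paper: the ``profile'' is the paper's notion of shape, your two-sided capped ``type'' is exactly the paper's $r$-equivalence of shapes (the cap on $|V_i|-a_i$ corresponds to the paper's condition on complementary shapes), the threshold is instantiated there as $r=2^{q_S}q_E$ via Lampis's lemma, and the monotone minimization over free coordinates plus one \MSOo model check per equivalence class is precisely the paper's algorithm.
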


We also provide an algorithm for the \MSOt logic (strictly more powerful than \MSOo), however we need a more restrictive parameter because model checking of an \MSOt formula is not even in \XP for cliques unless $\E=\NE$~\cite{Courcelle:00,Lampis:13}.  We consider the size of minimum vertex cover that allows us to attack the edge-deletion problem in \FPT time.  

\begin{theorem}
	\label{thm:FPTvertexCover}
	\textsc{Generalized Fair \MSOt edge-deletion} is in \FPT with respect to the size of minimum vertex cover $\vc(G)$ and the size of the formula $\psi$.
\end{theorem}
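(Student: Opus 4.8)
The plan is to exploit the strong symmetry that a small vertex cover imposes on the independent set, and to reduce the search for a fair solution to a bounded-dimensional integer program. Let $C$ be a minimum vertex cover of $G$ (computable in \FPT time), put $\tau = \vc(G) = |C|$, and let $I = V(G) \setminus C$ be the resulting independent set. First I would partition $I$ into \emph{types}, grouping two vertices $u,u'$ together whenever $N(u) = N(u') \seq C$; there are at most $2^\tau$ types, one for each subset of $C$. Since every edge has at least one endpoint in $C$, a candidate solution $F$ is completely described by (i) which of the at most $\binom{\tau}{2}$ edges inside $C$ are deleted, and (ii) for each vertex $u \in I$ whose neighborhood is $N$, the \emph{deletion pattern} $S \seq N$ listing its deleted incident edges. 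Thus the only data attached to the independent set is, for every type $T_i$ (with neighborhood $N_i$) and every pattern $S \seq N_i$, the number $n_{i,S}$ of type-$T_i$ vertices realizing pattern $S$.

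The heart of the argument is a composition (Feferman--Vaught / Ehrenfeucht--Fra\"iss\'e) lemma for \MSOt over the two-sorted incidence structure $(G,F)$, in which $F$ is a unary predicate on edges. Two independent-set vertices of the same type and the same pattern are interchangeable by an automorphism of $(G,F)$ that also swaps their incident edges, so they are indistinguishable to any \MSOt formula. I would make this quantitative: the \MSOt type of $(G,F)$ of quantifier rank $q \le |\psi|$ is determined by the deletion pattern inside $C$ together with the vector of counts $n_{i,S}$ \emph{thresholded at $q$}, i.e. with each count exceeding $q$ replaced by the flag ``$\ge q$''. Consequently $G \models \psi(F)$ depends only on this bounded amount of information, and whether a given thresholded profile is accepting can be decided in advance by evaluating $\psi$ on a canonical representative structure of size bounded by a function of $\tau$ and $|\psi|$ (this is easy, since the representative again has a vertex cover of size $\tau$, so one may invoke Courcelle's theorem or simply brute-force).

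It remains to realize an accepting profile while respecting the fair constraint, and this is where the fair objective enters, doing so linearly. For an independent-set vertex with pattern $S$ the number of deleted incident edges is exactly $|S|$, so I would forbid patterns with $|S| > k$. For a cover vertex $c \in C$ the number of deleted incident edges equals the (fixed) number of deleted $C$-$C$ edges at $c$ plus $\sum_{i\,:\,c\in N_i}\sum_{S\ni c} n_{i,S}$, which must not exceed $k$; this yields $\tau$ linear inequalities in the variables $n_{i,S}$. Together with the balance equations $\sum_{S\seq N_i} n_{i,S} = |T_i|$ and the constraints forcing the $n_{i,S}$ to agree with the chosen thresholded profile, feasibility becomes an integer linear program in at most $4^\tau$ variables, solvable in \FPT time by Lenstra's algorithm. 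The algorithm then enumerates the $2^{\binom{\tau}{2}}$ deletion patterns inside $C$ and the boundedly many thresholded profiles, retains those whose canonical structure satisfies $\psi$, and answers yes iff some retained pair admits a feasible integer program.

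The main obstacle I expect is the composition lemma of the second paragraph: one must verify the thresholding claim for \MSOt, not merely \MSOo. Because \MSOt quantifies over edge sets, the Ehrenfeucht--Fra\"iss\'e (equivalently, automaton) argument has to move whole bundles of incident edges together when exchanging symmetric independent-set vertices, and one has to check that the bounded-vertex-cover structure of $(G,F)$ keeps the relevant transition monoid finite, so that the counts indeed saturate at a threshold depending only on $q$. This is precisely the point at which the vertex-cover parameter is essential and cannot be relaxed to neighborhood diversity, since \MSOt model checking is already intractable on cliques.
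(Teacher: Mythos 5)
Your proposal is correct in substance and its combinatorial core coincides with the paper's: both describe a candidate solution $F$ by the deletion pattern at each cover vertex together with counts of independent-set vertices per (neighborhood type, deletion pattern), both observe that only these counts, thresholded at a bound depending on $|\psi|$, matter for deciding $G \models \psi(F)$, and both model-check one representative per thresholded profile. There are two genuine differences. First, where you propose to prove the \MSOt composition/thresholding lemma directly by an Ehrenfeucht--Fra\"{\i}ss\'e argument over the incidence structure --- and rightly flag this as the main obstacle --- the paper sidesteps it entirely: it first translates the \MSOt formula with one free edge-set variable into an \MSOo formula with $\vc(G)$ free vertex-set variables $U_i = \{w \mid \{w,v_i\}\in F\}$ (citing Lampis), after which the required indistinguishability statement is exactly the already-known \MSOo thresholding lemma for bounded neighborhood diversity (the analogue of Lemma~\ref{lem:formula_and_large_shape}). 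Your count $n_{i,S}$ is precisely the paper's shape $S_{\mathcal U}(j,I)$ of that tuple of vertex sets, with your pattern $S$ playing the role of the index set $I$; even the edges inside $C$ are absorbed into the $U_i$ rather than enumerated separately. If you insist on the direct \MSOt route you must actually carry out the EF/automaton argument, which is real work; the translation turns it into a citation. Second, to realize an accepting profile under the fair constraint you set up an integer program in at most $4^\tau$ variables and invoke Lenstra's algorithm; the paper avoids integer programming by noting that the fair cost ($|U_i|$ at a cover vertex, $|\{i \mid w \in U_i\}|$ at any other vertex) is monotone in the counts, so within each thresholded equivalence class one simply takes the componentwise-minimal representative and checks it against the budget $k$. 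Both are valid; your ILP is heavier machinery than needed here, though it would also handle objectives for which the minimal representative is not obvious.
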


	\section{Preliminaries}
Throughout the paper we deal with simple undirected graphs.
For further standard notation in graph theory, we refer to Diestel~\cite{Diestel}.\tmcom{Možná bych Diestela změnil za nějakou jinou knihu, ale obecně mi takové řádky přijdou fajn.}
For terminology in parameterized computational complexity we refer to Downey and Fellows~\cite{df13}.

\subsection{Graph parameters}

We define several graph parameters being used throughout the paper.

\begin{figure}[ht]
  \begin{minipage}[c]{0.5\textwidth}
	\centering
    \includegraphics{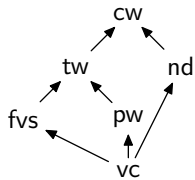}
  \end{minipage}\hfill
  \begin{minipage}[c]{0.5\textwidth}
\caption{Hierarchy of graph parameters. An arrow indicates that a graph parameter upper-bounds the other. Thus, hardness results are implied in direction of arrows and \FPT algorithms are implied in the reverse direction.
}
\end{minipage}
\label{fig:classes}
\end{figure}
We start by definition of \emph{vertex cover} being a set of vertices such that its complement forms an independent set. By $\vc{(G)}$ we denote the size of a smallest such set. This is the strongest of considered parameters and it is not bounded for any natural graph class.

A \emph{feedback vertex set} is a set of vertices whose removal leaves an acyclic graph. Again, by $\fvs{(G)}$ we denote the size of a smallest such set.

Another famous graph parameter is \emph{tree-width} introduced by Bertelé and Brioshi~\cite{bb72}.
\begin{definition}[Tree decomposition]
	A \emph{tree decomposition} of a graph $G$ is a pair $(T,X)$, where ${T=(I,F)}$ is a tree, and $X=\{X_i\mid i\in I\}$ is a family of subsets of $V(G)$ such that:
	\begin{itemize}
		\item the union of all $X_i$, $i\in I$ equals $V$,
		\item for all edges $\{v,w\}\in E$, there exists $i\in I$, such that $v,w\in X_i$ and
		\item for all $v\in V$ the set of nodes $\{i\in I\mid v\in X_i\}$ forms a subtree of $T$.
	\end{itemize}
\end{definition}
The \emph{width} of the tree decomposition is $\max(|X_i|-1)$.
The \emph{tree-width} of a graph $\tw{(G)}$ is the minimum width over all possible tree decompositions of the graph $G$.
The parameter of \emph{path-width} (analogously $\pw{(G)}$) is almost the same except the decomposition need to form a path instead of a general tree.

A less known graph parameter is the \emph{neighborhood diversity} introduced by Lampis~\cite{Lam}.
\begin{definition}[Neighborhood diversity]
	The \emph{neighborhood diversity} of a graph $G$ is denoted by $\nd{(G)}$ and it is the minimum size of a partition of vertices into classes such that all vertices in the same class have the same neighborhood, i.e. ${N(v)\setminus\{v'\}=N(v')\setminus\{v\}}$, whenever
	$v,v'$ are in the same class.
\end{definition}
It can be easily verified that every class of neighborhood diversity is either a clique or an independent set.
Moreover, for every two distinct classes $C$ and $C'$, either every vertex in $C$ is adjacent to every vertex in $C'$,
or there is no edge between them. If classes $C$ and $C'$ are connected by edges,
we refer to such classes as \emph{adjacent}. 
\tmcom{ TODO?? přidat event: generalizes vc(G) and incomparable with tw }

\subsection{Parameterized problems and Exponential Time Hypothesis}



\begin{definition}[Parameterized language]
Let $\Sigma$ be a finite alphabet.
A \emph{parameterized language} $L \subseteq \Sigma^\ast \times \N$ set of pairs $(x, k)$ where $x$ is a finite word over $\Sigma$ and $k$ is a nonnegative integer.
\end{definition}
We say that an algorithm for a parameterized problem $L$ is an \emph{\FPT algorithm} if there exist
a constant $c$ and a computable function $f$ such that the running time for input $(x,k)$ is $f(k)|x|^c$ and the algorithm accepts
$(x,k)$ if and only if $(x,k) \in L$.

A standard tool for showing nonexistence of an \FPT algorithm is \W{1}-hardness (assuming $\FPT \neq \W{1}$).
For the definition of \W{1} class and the notion of \W{1}-hardness, we refer the reader to~\cite{df13}.

A stronger assumption than $\FPT \neq \W{1}$ that can be used to obtain hardness results is
the Exponential Time Hypothesis (ETH for short). It is a complexity theoretic assumption introduced by Impagliazzo, Paturi and Zane~\cite{IPZ01:ETH}.
We follow a survey on the topic of lower bounds obtained from ETH by Lokshtanov, Marx, and Saurabh~\cite{LMS11:ETHLowerBoundsSurvey}, which contains more details on this topic.

The hypothesis states that there is no subexponential time algorithm for {\sc 3-SAT} if we measure the time complexity by the number of variables in the input formula, denoted by $n.$

\vskip .2cm
\begin{minipage}[c]{.9\textwidth}
{\bf Exponential Time Hypothesis (ETH)~\cite{IPZ01:ETH}}
There is a positive real $s$ such that {\sc 3-SAT} with parameter $n$ cannot be solved in time ${2^{sn}(n+m)^{\bigO{1}}}.$
\end{minipage}
\vskip .2cm


\begin{definition}[Standard parameterized reduction]\label{def:reduction}
	We say that parameterized language $L$ reduces to parameterized language $L'$ by a \emph{standard parameterized reduction} if there are functions ${f,g\colon\N\to\N}$
and ${h\colon\Sigma^*\times\N\to\Sigma^*}$ such that
		function $h$ is computable in time $g(k) |x|^c$ for a constant $c$, and
		$(x,k)\in L$ if and only if $(h(x,k),f(k))\in L'$.
\end{definition}

For preserving bounds obtained from the ETH, the asymptotic growth of the function $f$ need to be as slow as possible.

\subsection{Logic systems}
\label{subsec:logic_systems}
\tmcom{možná bych zde vysvětlil co myslíme tím size of formula.}
We heavily use graph properties that can be expressed in certain types of logical systems.
In the paper it is \emph{Monadic second-order logic} (\MSO) where monadic means that we allow quantification over sets (of vertices and/or edges). In \emph{first order logic} (\FO) there are no set variables at all.\tmcom{přeformulovat}

We distinguish \MSOt and \MSOo. In \MSOo quantification only over sets of vertices is allowed and we can use the predicate of adjacency \adj{u}{v} returning true whenever there is an edge between vertices $u$ and $v$.
In \MSOt we can additionally quantify over sets of edges and we can use the predicate of incidence \inc{v}{e} returning true whenever a vertex $v$ belongs to an edge $e$.

It is known that \MSOt is strictly more powerful than \MSOo. For example, the property that a graph is Hamiltonian is expressible in \MSOt but not in \MSOo \cite{LibkinFMT}. 

Note that in \MSOo it is easy to describe several complex graph properties like being connected or having a vertex of a constant degree.

	\section{Hardness results}

In this section, we prove hardness of \textsc{Fair \FO vertex-deletion} by exhibiting a reduction from \textsc{Equitable 3-coloring}.

\prob{Equitable 3-coloring}
{An undirected graph $G$.}
{Is there a proper coloring of vertices of $G$ by at most $3$ colors such that the size of any two color classes differ by at most one?}

The following result was proven implicitly in~\cite{eq_coloring}.

\begin{theorem}
	\label{thm:eq_col_hardness}
	\textsc{Equitable 3-coloring} is $\W{1}$-hard with respect to $\pw(G)$ and $\fvs(G)$ combined. Moreover, if there exists an algorithm for \textsc{Equitable 3-color\-ing} running in time $f(k)n^{o(\sqrt[3]k)}$, where $k$ is $\pw(G) + \fvs(G)$, then the Exponential Time Hypothesis fails.
\end{theorem}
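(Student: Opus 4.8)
The plan is to establish both statements with a single parameterized reduction whose output is analyzed carefully enough to read off the pathwidth, the feedback vertex set size, and their dependence on the source parameter. For the \W{1}-hardness I would reduce from a problem that is \W{1}-hard under the parameter controlling the construction, and for the ETH bound I would insist that this same source problem have a known ETH lower bound, so that composing the reduction with a hypothetical fast algorithm for \textsc{Equitable 3-coloring} refutes it. A natural candidate on both counts is a multicolored selection/consistency problem (for instance \textsc{Multicolored Clique} or a grid-tiling variant) with $t$ color classes, for which no $n^{o(t)}$ algorithm exists under ETH.

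First I would design the gadgets so that a proper $3$-coloring is always \emph{locally} available, but the equitability constraint---forcing all three color classes to have almost equal size---acts as a \emph{global} counting constraint that encodes the source instance. This is the crucial conceptual point: ordinary $3$-coloring is fixed-parameter tractable in $\pw(G)$, so the hardness must live entirely in the balancing requirement. That is exactly what pushes the problem out of \FPT, since the naive dynamic program is forced to track color-class sizes and so pays $n^{\Theta(\pw(G))}$. Concretely, I would attach to each choice of the source instance a selection gadget whose admissible colorings are in bijection with the possible choices, and then add padding vertices whose only role is to make the global counts line up, so that a globally equitable coloring exists if and only if the selections are mutually consistent.

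Next I would bound the two structural parameters. The gadgets should be built as trees or paths joined through a small number of shared \emph{hub} vertices; deleting the hubs leaves a forest, which bounds $\fvs(G)$ by the number of hubs, while laying the gadgets out in a line yields a path decomposition whose width is proportional to the per-gadget width plus the number of hubs, bounding $\pw(G)$. The delicate part is the arithmetic: I would track how the hub count and the gadget width grow with the source parameter $t$ and argue that $\pw(G)+\fvs(G)=O(t^3)$. Then a hypothetical $f(k)\,n^{o(\sqrt[3]{k})}$ algorithm would, on the reduced instances, run in time $n^{o(\sqrt[3]{t^3})}=n^{o(t)}$ on the source and contradict its ETH lower bound; the same construction, being a standard parameterized reduction, simultaneously yields the \W{1}-hardness.

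I expect the main obstacle to be controlling $\fvs(G)$ and $\pw(G)$ \emph{simultaneously} while keeping the equitability encoding faithful: the balancing gadgets naturally want to introduce large cliques or high-degree junctions in order to pin down color-class sizes, and these are precisely the features that inflate both parameters. Bringing the parameter blow-up down to the cubic regime that matches the $\sqrt[3]{k}$ exponent---rather than some larger polynomial---is the quantitatively sensitive step, and it is where I would invest the most care, essentially by reusing one small shared pool of hub and padding vertices across many gadgets instead of giving every gadget its own.
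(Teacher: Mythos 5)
Your plan founders on the claim in your second paragraph that the equitability requirement forces a dynamic program over a path decomposition to ``track color-class sizes and so pay $n^{\Theta(\pw(G))}$.'' With only three colors this is false: the standard DP over a (nice) path or tree decomposition needs to store, besides the proper coloring of the current bag, only the two running color-class sizes among the already-processed vertices, i.e.\ at most $3^{\pw(G)+1}\cdot(n+1)^2$ states per node, with polynomial-time transitions (including at join nodes, where one convolves the two count vectors). Hence \textsc{Equitable 3-coloring} is solvable in time $3^{\pw(G)}\,n^{\bigO{1}}$, so it is \FPT when parameterized by $\pw(G)$, and likewise by $\fvs(G)$ since $\tw(G)\le\fvs(G)+1$. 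Consequently no reduction of the kind you sketch can exist unless $\FPT=\W{1}$, and the ETH part collapses for the same reason: the statement you are asked to prove is not provable as written. The $n^{\Theta(\cdot)}$ blow-up you are counting on appears only when the \emph{number of color classes grows with the source parameter}, because then the DP must carry unboundedly many class counts; that is precisely how the known $\W{1}$-hardness of \textsc{Equitable Coloring} via \textsc{Multicolored Clique} works, and it is why that result does not transfer to a fixed number of colors. Any faithful encoding of \textsc{Multicolored Clique} into a \emph{3-class} balancing constraint would have to compress $\Theta(t)$ independent selections into two integers, which the above algorithm shows cannot be done.

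For comparison, the paper offers no construction at all for this statement: it cites the unbounded-colors hardness result as having ``implicitly'' established the three-color case, together with a remark on the parameter blow-up of that reduction. It therefore inherits exactly the gap identified above (and its arithmetic is internally inconsistent besides: a blow-up of $\bigO{k}$ would yield an $n^{\smallo{k}}$ lower bound, not $n^{\smallo{\sqrt[3]{k}}}$, whereas your $\bigO{t^3}$ accounting at least matches the claimed exponent). Your instinct to redo the reduction explicitly is the right one --- it is what exposes the flaw --- but to salvage anything you would need to change the statement, e.g.\ to \textsc{Equitable $r$-Coloring} with $r$ part of the input and of the parameter, or to a source problem such as \textsc{Equitable Connected Partition}, rather than prove it for three colors.
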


The proof in~\cite{eq_coloring} relies on a reduction from \textsc{Multicolored Clique}~\cite{DBLP:journals/tcs/FellowsHRV09} to \textsc{Equitable coloring}. The reduction transforms an instance of \textsc{Multicolored clique} of parameter $k$ into an \textsc{Equitable coloring} instance of path-width and feedback vertex size at most $\bigO{k}$ (though only tree-width is explicitly stated in the paper). Algorithm  for \textsc{Equitable coloring} running in time $f(k)n^{o(\sqrt[3] k)}$ would lead to an algorithm for \textsc{Multicolored Clique} running in time $f(k)n^{o(k)}$. It was shown by Lokshtanov, Marx, and Saurabh~\cite{LMS11:ETHLowerBoundsSurvey} that such algorithm does not exist unless ETH fails.

We now describe the idea behind the reduction from \textsc{Equitable 3-coloring} to \textsc{Fair \FO vertex-deletion}. Let us denote by $n$ the number of vertices of $G$ and assume that $3$ divides $n$. The vertices of $G$ are referred to as \emph{original vertices}. First, we add three vertices called \emph{class vertices}, each of them corresponds to a particular color class. Then we add edge between every class vertex and every original vertex and subdivide each such edge. The vertices subdividing those edges are called \emph{selector vertices}.

We can encode the partition of $V(G)$ by deleting vertices in the following way: if $v$ is an original vertex and $c$ is a class vertex, by deleting the selector vertex between $v$ and $c$ we say\ttcom{tohle je pekne dementni vyraz, ale o pul druhe nic lepsiho nemam} that vertex $v$ \emph{belongs} to the class represented by $c$. If we ensure that the set is deleted in such a way that every vertex belongs to exactly one class, we obtain a partition of $V(G)$.

The equitability of the partition will be handled by the fair objective function. Note that if we delete a subset $W$ of selector vertices that encodes a partition then $|W| = n$. Those $n$ vertices are adjacent to $3$ class vertices, so the best possible fair cost is $n/3$ and thus a solution of the fair cost $n/3$ corresponds to an equitable partition.

Of course, not every subset $W$ of vertices of our new graph encodes a partition. Therefore, the formula we are trying to satisfy must ensure that:
\begin{itemize}
	\item every original vertex belongs to exactly one class,
	\item no original or class vertex was deleted,
	\item every class is an independent set.
\end{itemize}
However, the described reduction is too naive to achieve those goals; we need to slightly adjust the reduction.
Let us now describe the reduction formally:
\begin{proof}[of Theorem~\ref{thm:hardvertex}]
Let $G$ be a graph on $n$ vertices. We can  assume without loss of generality (by addition of isolated vertices.) that $3$ divides $n$ and $n\ge 6$.

First we describe how to construct the reduction. All vertices of $G$ will be referred to as \emph{original vertices}. We add three vertices called \emph{class vertices} and connect every original vertex with every class vertex by an edge. We subdivide each such edge once; the vertices subdividing those edges are called \emph{selector vertices}. Finally, for every original vertex $v$, we add $n$ new vertices called \emph{dangling vertices} and connect each of them by an edge to $v$. We denote the graph obtained in this way as $G'$. For a schema of the reduction, see Figure~\ref{fig:reduction_figure}.

\begin{figure}
	\begin{center}
	\includegraphics{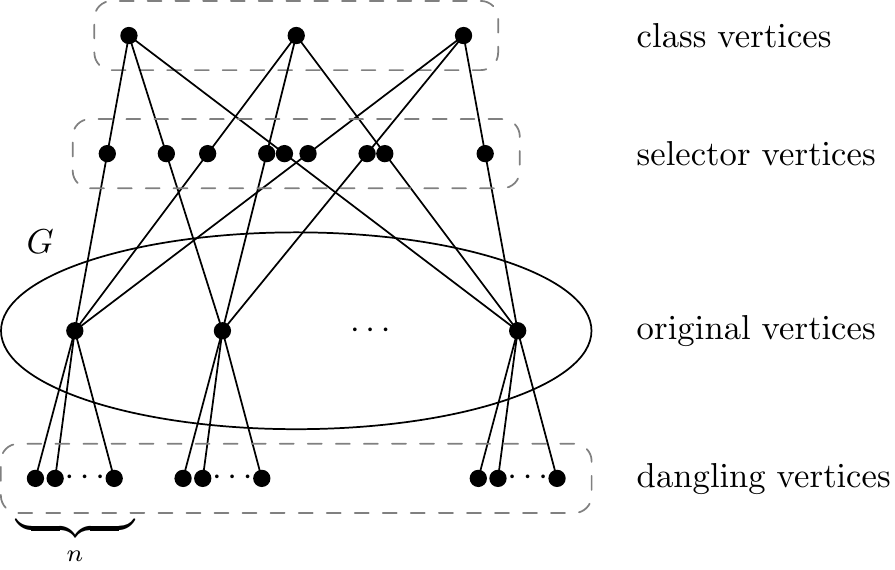}
	\end{center}
	\caption{The schema of the reduction}
	\label{fig:reduction_figure}
\end{figure}

Now, we wish to find a set $W\seq V(G')$ such that it encodes an equitable 3-coloring of a graph $G$. The set is described by the following \FO formula $eq\_3\_col$ imposed on a graph $G\setminus W$. We claim that  whenever this set satisfy following claims it encodes an equitable 3-coloring. A set $W$ can contain only selector vertices and some dangling vertices (but those do not affect the coloring). For each vertex $v$ of a graph there can be only one selector vertex in the set $W$ and that vertex has only one class vertex as a neighbor. That vertex determine the color 
of $v$.\tmcom{Mě se to stejně více hodí až dozadu.}

We use the following shorthand $\exists_{=k}$ meaning there are exactly $k$ distinct elements satisfying a given predicate:
\begin{multline*}
(\exists_{=k} w)(pred(w)) \equiv (\exists v_1, \ldots, v_k)\bigg(
	\Land_{i=1}^k pred(v_i) \land \Land_{1 \leq i < j \leq k}(v_i \neq v_j)\\
	\land (\forall v')\Big(pred(v') \rightarrow \Lor_{i=1}^k (v' = v_i)\Big)
	\bigg)
\end{multline*}
The building blocks for the formula are as follows:
{
\allowdisplaybreaks
\begin{align*}
	isol(v) &\equiv (\forall w)(\lnot adj(v,w)) \\
	dangling(v) &\equiv (\exists w)\big(adj(v,w) \land (\forall w')(adj(v,w') \rightarrow w = w')\big) \\
	original(v) &\equiv (\exists w)(dangling(w) \land adj(v,w)) \\
	selector(v) &\equiv (\exists_{=2} w)(adj(v,w))\\
	class(v) &\equiv \lnot orig(v) \land \lnot selector(v) \land \lnot dangling(v) \\
	belongs\_to(v,a) &\equiv original(v) \land class(a) \land \lnot(\exists w)(adj(v,w) \land adj(w,a)) \\
	same\_class(v,w) &\equiv original(v) \land original(w) \\
	 &\quad \land (\exists a)(class(a) \land belongs\_to(v,a) \land belongs\_to(w,a)) \\
	valid\_deletion &\equiv (\forall v)(\lnot isol(v)) \\ & \quad \land (\forall v)\big(original(v) \rightarrow (\exists_{=1} c)(belongs\_to(v,c))\big)\\
	eq\_3\_col &\equiv valid\_deletion \land (\forall v,w)(same\_class(v,w) \rightarrow \lnot adj(v,w)) \\
\end{align*}
}

The described reduction maps an instance $G$ of an \textsc{Equitable coloring} into an instance $(G', eq\_3\_col, n/3)$ of \textsc{Fair \FO vertex-deletion}.

We claim that there exists a set $W \subseteq V(G')$ of the fair cost at most $n/3$ if and only if $G$ admits an equitable 3-coloring.

If we have an equitable $3$-coloring of $G$ then it is easy to see that the set $W \subseteq V(G')$ corresponding to a partition into color classes has the fair cost exactly $n/3$ and it is straightforward to check that $G' \setminus W \models eq\_3\_col$.

For the other implication we prove that if we delete a subset $W \subseteq V(G')$ of the fair cost at most $n/3$, and the formula $valid\_deletion$ is true, then we obtained an equitable 3-coloring of a graph $G$. To get there we made a few basic claims. 

\if false
\todo[inline]{
\begin{itemize}
\item no original vertex was deleted,
\item if $w$ has degree one in $G' \setminus W$, then its only neighbor is the original vertex,
\item formula $original$ correctly recognizes original vertices,
\item If $v$ is a dangling vertex then $dangling(v)$ is true,
\item if $class(v)$ is true if and only if $v$ is a class vertex,
\item no class vertex was deleted.
\end{itemize}
}
\fi

\emph{Claim 1: no original vertex was deleted:}
Suppose for the contradiction that original vertex $v$ was deleted. If we kept at least one of the dangling vertices attached to $v$, but this vertex is now isolated and formula $valid\_deletion$ is not true. On the other hand if we delete all dangling vertices that were attached to $v$, our deleted set has fair cost at least $n$.

\emph{Claim 2: if $w$ has degree one in $G' \setminus W$, then its only neighbor is an original vertex:}
If $w$ is dangling, then its only neighbor is original vertex by the construction of $G'$. Suppose that $w$ has degree one in $G' \setminus W$ but is not dangling. Since both class and original vertices have degree at least $n$ in $G'$, we cannot bring them down to degree one without exceeding the fair cost limit $n/3$. This leaves the only possibility that $w$ is a selector and exactly one of its two neighbors is in the deleted set $W$. By Claim 1, the deleted neighbor must have been a  class vertex so the only remaining neighbor of $w$ in $G' \setminus W$ is an original vertex.

\emph{Claim 3: the formula $original$ correctly recognizes original vertices:}
If $v$ is original, then at least one of its dangling neighbors is not in $W$, otherwise we would exceed the fair cost. In this case the formula $original(v)$ is true. The other direction ($original(v)$ is true implies $v$ is original) is proved by Claim 2.

\emph{Claim 4: if $v$ is a dangling vertex such that $v \notin W$ then $dangling(v)$ is true:} By Claim 1, we cannot delete the only neighbor of $v$, which means $v$ has exactly one neighbor and so $dangling(v)$ is true.

\emph{Claim 5: the formula $class(v)$ is true if and only if $v$ is a class vertex that was not deleted:} Suppose that $v \notin W$ is a class vertex. It cannot have neighbor of degree one in $G' \setminus W$, because that would mean that an original vertex was deleted which violates Claim 1. This means that $original(v)$ is false. Moreover, we cannot decrease the degree of $v$ to two or less by deleting at most $n/3$ neighbors of $v$, so $dangling(v)$ and $selector(v)$ are false too. But then $class(v)$ is true.

For the other direction suppose that $v$ is not a class vertex. If it is original or dangling, then $original(v)$ or $dangling(v)$ is true (by Claim 3 or Claim 4) and hence $class(v)$ is false. If $v$ is a selector then either none of its neighbors were deleted, $v$ has degree two in $G' \setminus W$ and $selector(v)$ is true, or its class neighbor was deleted, $v$ has degree one in $G' \setminus W$ and $dangling(v)$ is true. Either way, $class(v)$ is false as required.

\emph{Claim 6: no class vertex was deleted:} since $valid\_deletion$ is true, we know that for every original vertex $v$ there is exactly one class vertex $c$ such that there is no path of length two between $v$ and $c$ (in other words, the selector vertex that was on the unique path of length two between $v$ and $c$ was deleted). Suppose for contradiction that one of the class vertices was deleted; then by Claim 5 we have at most two class vertices. But the $valid\_deletion$ formula implies that at least $n$ selector vertices were deleted. By pigeonhole principle, one of the class vertices has at least $n/2$ deleted neighbors which means the fair cost is greater than $n/3$, a contradiction.

The chain of claims we just proved guarantees that the deleted set $W$ indeed obeys the rules we required and corresponds to a partition (though we might have deleted a small number of dangling vertices, this does not affect the partition in any way). In order to meet the fair cost limit, each class of the partition must have at most $n/3$ vertices and since no original vertex was deleted, it has exactly $n/3$ vertices. Now it is easy to see that the formula $eq\_3\_col$ forces that each class of the partition is independent and so the graph $G$ has an equitable $3$-coloring.

Let us now discuss the parameters and the size of the \textsc{Fair \FO vertex-deletion} instance. If $G$ has a feedback vertex set $S$ of size $k$, then the union of $S$ with the set of class vertices is a feedback vertex set of $G'$. Therefore, $\fvs(G') \leq \fvs(G) + 3$. To bound the path-width, observe that after deletion of the class vertices we are left with $G$ with $\bigO{n^2}$ added vertices of degree one; the addition of degree one vertices to the original vertices can increase the path-width by at most one and so we have $\pw(G') \leq \pw(G) + 4$. Moreover it is clear that the size of instance is of size $\bigO{n^2}$. It is obvious that the reduction can be carried out in polynomial time.
\end{proof}

Let us mention that if we are allowed to use \MSO formulas, we are actually able to reduce any equitable partition problem to fair vertex deletion. This allows us to reduce for example \textsc{Equitable connected partition} to \textsc{Fair \MSO vertex-deletion} which in turn allows us to prove Theorem~\ref{thm:MSO_hardvertex}.

\prob{Equitable connected partition}
{An undirected graph $G$, a positive integer $r$}
{Is there a partition of $V(G)$ into $r$ sets such that each of them induces a connected graph and the sizes of every two sets differ by at most one?}

Enciso et al.~\cite{eq_conn_part} showed that \textsc{Equitable Connected Partition} is \W{1}-hard for combined parameterization by $\fvs(G)$, $\pw(G)$, and the number of partitions $r$. The part that $f(k)n^{o(\sqrt k)}$ algorithm would refute ETH is again contained only implicitly; the proof reduces an instance of \textsc{Multicolored clique} of parameter $k$ to an instance of \textsc{Equitable connected partition} of parameter $\bigO{k^2}$.

Our reduction can be easily adapted to $r$ parts (we just add $r$ class vertices and we set the fair cost limit to $n / r$). We define the formula $eq\_conn$ as follows.

\begin{align*}
	class\_set(W) &\equiv (\exists v \in W) \land (\forall v,w \in W)( same\_class(v,w)) \\
		 &\quad \land (\forall w \in W, z \notin W)(\lnot same\_class(w,z)) \\
	eq\_conn &\equiv (\forall W)(class\_set(W) \rightarrow connected(W)) \\
\end{align*}

By the same argument as in the proof of Theorem~\ref{thm:hardvertex}, we can show that there exists $W \subseteq V$ of fair cost at most $n/r$ such that $G' \setminus W \models eq\_conn$ if and only if $G$ admits an equitable connected partition.

\smallskip
\noindent\emph{Sketch of proof of Theorem~\ref{thm:edge_deletion_hardness}:}  We do not present the complete proof, as the critical parts are the same as in proof of Theorem~\ref{thm:hardvertex}.
The reduction follows the same idea as before: we add three class vertices and connect each class vertex to each original vertex by an edge. This time, we do not subdivide the edges, as the partition is encoded by deleting the edges.

The protection against tampering with the original graph has to be done in slightly different way: in this case, we add $n/3 + 1$ dangling vertices of degree one to each original vertex.
Note that if we delete a set $F \subseteq E(G)$ of fair cost at most $n/3$, at least one of the added edges from every original vertex survives the deletion, so we can recognize the original vertices by having at least one neighbor of degree one.
In our formula, we require that each vertex has at most two neighbors of degree one. This forces us to delete all of those added edges except two. Since at least one edge from the original vertex must be deleted to encode a partition, by deleting an edge of the original graph $G$ we would exceed the fair cost limit $n/3$.

For the edge-deletion the formula $eq\_3\_col$ is built as follows.

\allowdisplaybreaks
\begin{align*}
	dangling(v) &\equiv (\exists w)\big(adj(v,w) \land (\forall w')(adj(v,w') \rightarrow w = w')\big) \\
	original(v) &\equiv (\exists w)(dangling(w) \land adj(v,w)) \\
	class(v) &\equiv \lnot orig(v) \land \lnot dangling(v) \\
	belongs\_to(v,a) &\equiv original(v) \land class(a) \land \lnot adj(v,a) \\
	same\_class(v,w) &\equiv original(v) \land original(w) \\
	 &\quad \land (\exists a)(class(a) \land belongs\_to(v,a) \land belongs\_to(w,a)) \\
	valid\_deletion &\equiv (\forall v)( \exists_{\leq 2} w)(adj(v,w) \land dangling(w)) \\ & \quad \land (\forall v)\big(original(v) \rightarrow (\exists_{=1} c)(belongs\_to(v,c))\big)\\
	eq\_3\_col &\equiv valid\_deletion \land (\forall v,w)(same\_class(v,w) \rightarrow \lnot adj(v,w)) \\
\end{align*}

The complete proof of correctness is omitted due to space considerations, however, it is almost exactly the same as in the proof of Theorem~\ref{thm:hardvertex}.\hfill\qed

The transition between the \FO case and the \MSO case of edge-deletion (Theorem~\ref{thm:MSO_edge_deletion_hardness}) is done in exactly the same way as before.

	\section{FPT algorithms}

We now turn our attention to FPT algorithms for fair deletion problems. 

\subsection{FPT algorithm for parameterization by neighborhood diversity}


\begin{definition}
	Let $G = (V,E)$ be a graph of neighborhood diversity $k$ and let $N_1,\ldots,N_k$
denote its classes of neighborhood diversity. 
A \emph{shape of a set $X \subseteq V$ in $G$} is a $k$-tuple $s=(s_1,\ldots,s_k)$, where $s_i = |X \cap N_i|$.

We denote by $\overline s$ the \emph{complementary shape to $s$}, which is defined as the shape of $V \setminus X$,
i.e. $\overline{s} = (|N_1| - s_1, \ldots, |N_k| - s_k)$.
\end{definition}

\begin{proposition} 
\label{prop:property_depends_on_shape}
Let $G = (V,E)$ be a graph, $\pi$ a property of a set of vertices, and let $X,Y \subseteq V$ be two sets of the same shape in $G$. Then $X$ satisfies $\pi$ if and only if $Y$ satisfies $\pi$.
\end{proposition}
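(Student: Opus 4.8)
The plan is to exploit the rich automorphism structure that neighborhood diversity forces on $G$. The key observation is that \emph{any permutation of vertices that stays within the neighborhood-diversity classes is an automorphism of $G$}. Indeed, fix a class $N_i$; by the remark following the definition of neighborhood diversity, $N_i$ is either a clique or an independent set, so every permutation of $N_i$ preserves the edges inside $N_i$. Moreover, for every other class $N_j$ the adjacency between $N_i$ and $N_j$ is either complete or empty, so permuting the vertices of $N_i$ also preserves all edges leaving $N_i$. Hence the group $\prod_{i=1}^{k} \mathrm{Sym}(N_i)$ acts on $G$ by automorphisms.

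First I would use the equality of shapes to build a concrete automorphism carrying $X$ to $Y$. Since $X$ and $Y$ have the same shape, $|X \cap N_i| = |Y \cap N_i|$ for every class $N_i$; consequently $|N_i \setminus X| = |N_i| - |X \cap N_i| = |N_i| - |Y \cap N_i| = |N_i \setminus Y|$ as well. For each $i$ I can therefore choose a bijection $\sigma_i \colon N_i \to N_i$ that maps $X \cap N_i$ onto $Y \cap N_i$ (and hence $N_i \setminus X$ onto $N_i \setminus Y$). Taking $\sigma = \prod_{i=1}^{k} \sigma_i$ yields a single permutation of $V$ which, by the previous paragraph, is an automorphism of $G$ and satisfies $\sigma(X) = Y$.

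The final step is to invoke the fact that $\pi$ is a graph-theoretic property, i.e.\ it is invariant under automorphisms of $G$: if $\sigma$ is an automorphism with $\sigma(X) = Y$, then $X$ satisfies $\pi$ exactly when $Y$ satisfies $\pi$. Combined with the automorphism $\sigma$ constructed above, this yields the desired equivalence, and the argument is fully symmetric in $X$ and $Y$.

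The one point that needs care — and really the only obstacle — is the implicit hypothesis on $\pi$. For a truly arbitrary set predicate the statement is false (take $\pi(Z) \equiv$ ``$Z$ contains a fixed vertex $v$''), so the proposition must be read as concerning properties closed under isomorphism, which is exactly the setting of the \MSOo-definable properties used elsewhere in the paper. I would make this assumption explicit and note that \MSOo-definable (indeed, any logically or graph-theoretically defined) properties are automorphism-invariant, which is precisely what legitimizes the final step.
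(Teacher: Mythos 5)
Your proof is correct and follows exactly the route of the paper, whose entire argument is the single sentence ``Clearly, we can construct an automorphism of $G$ that maps $X$ to $Y$''; you simply spell out the construction (class-preserving permutations are automorphisms, and equal shapes let you choose the bijections $\sigma_i$) and make explicit the tacit assumption that $\pi$ is automorphism-invariant, which the paper indeed relies on for its \MSOo-definable properties.
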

\begin{proof}
	Clearly, we can construct an automorphism of $G$ that maps $X$ to $Y$. 
\end{proof}


\begin{definition} 
Let $r$ be a non-negative integer and let $(s_1, \ldots, s_k)$, $(t_1, \ldots, t_k)$ be two shapes. The
shapes are \emph{$r$-equivalent}, if for every $i$:
\begin{itemize}
	\item $s_i = t_i$, or
	\item both $s_i$, $t_i$ are strictly greater than $r$,
\end{itemize}
and the same condition hold for the complementary shapes $\overline s$, $\overline t$.
\end{definition}

The following proposition gives a bound on the number of $r$-nonequivalent shapes.
\begin{proposition}
\label{prop:num_of_noneq_shapes}
	For any graph $G$ of neighborhood diversity $k$, the number
	of $r$-nonequivalent shapes is at most $(2r+3)^k$.
\end{proposition}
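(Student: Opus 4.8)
The plan is to observe that the relation of $r$-equivalence on shapes factors as a product over the $k$ coordinates, and then to bound the number of equivalence classes within each coordinate separately.

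First I would unfold the definition coordinatewise. Two shapes $s = (s_1, \ldots, s_k)$ and $t = (t_1, \ldots, t_k)$ are $r$-equivalent exactly when, for every $i$, the pair $(s_i, t_i)$ satisfies both listed conditions (the one for the shape and the one for the complementary shape). Since both are conjunctions over the coordinates, $r$-equivalence holds if and only if $s_i$ and $t_i$ are related in each coordinate by the induced relation $\sim_i$ defined by: $a \sim_i b$ iff $a = b$, or both $a, b > r$ and both $|N_i| - a, |N_i| - b > r$. Hence the number of $r$-nonequivalent shapes equals the product over $i$ of the number of $\sim_i$-classes among the admissible values $\{0, 1, \ldots, |N_i|\}$.

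Next I would count the $\sim_i$-classes in a single coordinate. The condition ``$|N_i| - a > r$'' is just ``$a < |N_i| - r$'', so two distinct values $a \neq b$ satisfy $a \sim_i b$ precisely when both lie in the open interval $(r, |N_i| - r)$. Consequently every value $a$ with $a \leq r$ or $a \geq |N_i| - r$ forms a singleton class, while all values strictly between $r$ and $|N_i| - r$ collapse into a single class. The singletons number at most $r+1$ at the lower end (for $a \in \{0, \ldots, r\}$) and at most $r+1$ at the upper end (for $a \in \{|N_i| - r, \ldots, |N_i|\}$), and the collapsed middle contributes at most one further class, for a total of at most $2r + 3$ classes per coordinate.

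Finally, multiplying the per-coordinate bound over all $k$ coordinates yields at most $(2r+3)^k$ $r$-nonequivalent shapes, as claimed. No step presents a genuine obstacle; the only point requiring minor care is verifying that when $|N_i|$ is small (so the lower and upper singleton ranges overlap) or when the middle interval is empty, the true count only decreases, so the crude bound $2r+3$ remains valid in all cases.
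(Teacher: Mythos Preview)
Your proof is correct and follows essentially the same approach as the paper: both arguments factor the equivalence coordinatewise and then count at most $r+1$ ``small'' singleton classes, at most $r+1$ ``large'' singleton classes, and at most one collapsed middle class per coordinate, giving the bound $(2r+3)^k$. Your write-up is slightly more explicit about the coordinatewise factorization and about the degenerate small-$|N_i|$ case, but there is no substantive difference.
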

\emph{Proof.}
	We show that for every $i$, there are at most $(2r+3)$ choices of $s_i$.
	This holds trivially if $|N_i| \leq 2r+3$. Otherwise we have following $2r+3$ choices:

	\begin{itemize}
		\item $s_i = k$ and $\overline{s_i} > r$ for $k = 0,1,\ldots,r$, or
		\item both $s_i, \overline{s_i} > r$, or
		\item $s_i > r$ and $\overline{s_i} = k$ for $k = 0,1,\ldots,r$.
	\end{itemize}
	\vskip-20pt \hfill\qed

The next lemma states that the fair cost of a set can be computed from its shape in a straightforward manner.
Before we state it, let us introduce some auxiliary notation.

If a graph $G$ of neighborhood diversity $k$ has classes of neighborhood diversity $N_1,\ldots,N_k$,
we write $i \sim j$ if the classes $N_i$ and $N_j$ are adjacent. If the class $N_i$ is a clique,
we set $i \sim i$.
Moreover, we set $\eta_i =1$ if the class $N_i$ is a clique and $\eta_i = 0$ if it is an independent set.
The classes of size one are treated as cliques for this purpose.

\begin{lemma}
\label{lem:fair_cost_from_shape}
Let $G = (V,E)$ be a graph of neighborhood diversity $k$ and let $N_i$ be its classes of neighborhood diversity.
Moreover, let $X \subseteq V$ be a set of shape $s$. Then
the fair vertex cost of $X$ is
$$ \max_{i} \bigg(\Big(\sum_{j:i\sim j} s_j\Big)- \eta_{i}\bigg).$$ 
\end{lemma}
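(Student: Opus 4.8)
The plan is to read off $|N(v)\cap X|$ for a single vertex $v$ directly from the shape $s$ and then maximize over all $v\in V$, as required by the definition of the fair vertex cost $\max_{v\in V}|N(v)\cap X|$. The feature of neighborhood diversity that makes this possible is that whether a vertex lies in $N(v)$ depends only on the classes involved: if $v\in N_i$, then for $j\neq i$ every vertex of $N_j$ is a neighbor of $v$ exactly when $i\sim j$, and a vertex $u\in N_i$ with $u\neq v$ is a neighbor of $v$ exactly when $N_i$ is a clique, i.e. when $i\sim i$. Proposition~\ref{prop:property_depends_on_shape} already tells us that all quantities of interest depend only on the shape, so it suffices to count.

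First I would fix $v\in N_i$ and split $N(v)$ into the part lying outside $N_i$ and the part lying inside $N_i$. The outside part contributes exactly $\sum_{j\neq i,\, i\sim j} s_j$ elements of $X$, because for each adjacent class $N_j$ all of its $s_j$ vertices in $X$ are neighbors of $v$. The inside part is empty when $N_i$ is an independent set, so $\eta_i=0$ and nothing is added; for a clique, $v$ is adjacent to all of $N_i\setminus\{v\}$, and the number of those in $X$ is $s_i$ if $v\notin X$ and $s_i-1$ if $v\in X$. Folding $s_i$ into the sum through the convention $i\sim i$ for cliques yields $|N(v)\cap X| = \big(\sum_{j:i\sim j} s_j\big) - \eta_i\cdot[v\in X]$, where the factor $\eta_i$ records the single correction that $v$ is never counted among its own neighbors. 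Singleton classes, being treated as cliques, are handled by the same formula.

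It then remains to maximize over all $v\in V$, which I would organize as a maximum over the classes $N_i$ and, within each class, over the choice of $v$. Since the term $\sum_{j:i\sim j} s_j$ is identical for every vertex of $N_i$, only the self-adjacency term varies within a class, and grouping vertices by class reduces the global maximum to the outer $\max_i$ appearing in the statement. The independent-set classes give a uniform value $\sum_{j:i\sim j} s_j$ across the class and are the clean base case.

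The step I expect to be the main obstacle is pinning down the self-adjacency term in the clique case, i.e. justifying the exact $-\eta_i$ correction. One must argue which vertex of a clique class realizes the per-class maximum and whether that extremal vertex lies in $X$, since a vertex of $X$ contributes one neighbor fewer than a vertex outside $X$. I would isolate the clique classes as the only case requiring this bookkeeping and verify that the term $\big(\sum_{j:i\sim j} s_j\big)-\eta_i$ is precisely the value dictated by the extremal choice of $v$, while checking that the simultaneous maximum over all classes still coincides with the stated expression.
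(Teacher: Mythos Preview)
Your approach is exactly the paper's: compute $|N(v)\cap X|$ from the class index of $v$ and the shape $s$, then maximize over $v$. The paper's entire proof is the single sentence ``It is straightforward to check that vertex $v\in N_i$ has exactly $\sum_{j:i\sim j}s_j-\eta_i$ neighbors in $X$.''

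You are in fact more careful than the paper, and the concern you raise in your last paragraph is not merely an obstacle but a genuine discrepancy. Your intermediate formula $|N(v)\cap X|=\big(\sum_{j:i\sim j}s_j\big)-\eta_i\,[v\in X]$ is correct; the paper's one-line claim matches it only for $v\in X$. If $N_i$ is a clique with $N_i\not\subseteq X$, then any $v\in N_i\setminus X$ attains $|N(v)\cap X|=\sum_{j:i\sim j}s_j$, one more than the lemma's expression. The smallest witness is $G=K_3$ with $|X|=2$: the fair cost is $2$, while the formula gives $s_1-\eta_1=1$. So the ``verification'' you plan in the final paragraph will not go through as written; the correct per-class maximum is $\big(\sum_{j:i\sim j}s_j\big)-\eta_i\,[N_i\subseteq X]$. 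This does not affect the surrounding algorithm---the fair cost is still a function of the shape alone, just via a slightly different closed form---but the identity as stated is not exact, and the paper's proof glosses over precisely the case you isolated.
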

\begin{proof}
It is straightforward to check that vertex $v \in N_i$ has exactly
$\sum_{j:i\sim j} s_j - \eta_{i}$ neighbors in $X$. 
\end{proof}

Our main tool is a reformulation of Lemma~5 from \cite{Lam}:
\begin{lemma}
\label{lem:formula_and_large_shape}
Let $\psi$ be an \MSOo formula with one free vertex-set variable, $q_E$ vertex element quantifiers,
and $q_S$ vertex set quantifiers. Let $r = 2^{q_S}q_E$. If $G = (V,E)$ is a graph
of neighborhood diversity $k$ and
$X,Y \subseteq V$ are two sets such that their shapes are $r$-equivalent, then $G \models \psi(X)$ if and only if
$G \models \psi(Y)$.
\end{lemma}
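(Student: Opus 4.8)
The plan is to prove the lemma by an Ehrenfeucht--Fra\"iss\'e game argument, using the standard correspondence for \MSOo. Treating the distinguished set as an extra unary predicate, the two structures $(G,X)$ and $(G,Y)$ satisfy the same \MSOo formulas with at most $q_S$ set quantifiers and at most $q_E$ element quantifiers exactly when Duplicator has a winning strategy in the game on $(G,X)$ and $(G,Y)$ with $q_S$ set moves and $q_E$ point moves. So it suffices to design a winning strategy for Duplicator under the hypothesis that the shapes of $X$ and $Y$ are $r$-equivalent.

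The structural fact I would exploit is that in a graph of neighborhood diversity $k$ adjacency depends only on classes: distinct vertices $u\in N_i$, $v\in N_j$ are adjacent if and only if $i\sim j$, and membership in $X$ is recorded by the unary predicate. Consequently, the map between the elements selected during the game is a partial isomorphism (preserving adjacency, equality, membership in the selected sets and in $X$/$Y$) as soon as it sends each selected element to an element of the \emph{same class} lying in the \emph{same cell}, where the cells of a class are the blocks into which it is cut by the currently selected sets together with $X$ (resp.\ $Y$). This reduces the whole game to a counting game about cell sizes.

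Duplicator's invariant would be: after each round the cells on the two sides are in a size-respecting correspondence, namely every pair of corresponding cells either has equal size or both sizes exceed a threshold $\tau$ depending on the number of moves still to be played, and every already-selected point sits in corresponding cells. I would take the threshold at a position with $q_S'$ set and $q_E'$ point moves remaining to be about $2^{q_S'}q_E'$. Point moves are easy: Spoiler picks a vertex in some cell and Duplicator answers in the corresponding cell; equal cells are handled by a fixed bijection, and a big cell is never exhausted since its size exceeds the number of points that will ever be placed in it. The delicate part is a set move: Spoiler's set splits each cell into two subcells, and Duplicator must split the corresponding $Y$-cell to re-establish the invariant for the halved threshold $\tau'\approx 2^{q_S'-1}q_E'$. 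For a big cell this is done by matching any small subcell ($\le\tau'$) exactly and making the remaining subcell large; checking that a cell of size $>\tau$ can always be split so that each part is either matched exactly or has size $>\tau'$ is precisely where the doubling of the threshold (hence the factor $2^{q_S}$) is forced, and getting this recurrence right is the main obstacle. The base case is supplied directly by the hypothesis: before any set move each class $N_i$ is cut by $X$ into the two cells $X\cap N_i$ and $N_i\setminus X$, of sizes $s_i$ and $\overline{s_i}$, and $r$-equivalence of the shapes $s,t$ \emph{together with} $r$-equivalence of the complementary shapes $\overline s,\overline t$ says exactly that these initial cells are equal or both exceed $r=2^{q_S}q_E$. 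A routine induction then shows the invariant is maintained to the end of the game, so Duplicator wins and $G\models\psi(X)\iff G\models\psi(Y)$.
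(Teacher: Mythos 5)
You should first note that the paper does not prove this lemma at all: it is stated verbatim as ``a reformulation of Lemma~5 from~\cite{Lam}'' and imported by citation, so there is no in-paper proof to match. Your Ehrenfeucht--Fra\"iss\'e game argument is the standard way to establish exactly this kind of statement, and it is morally the same argument as in Lampis's original proof, which runs by induction on the quantifier prefix (an element quantifier consumes one witness from a cell, a set quantifier splits each cell in two and at worst halves the guaranteed size); the game formulation and the inductive formulation are interchangeable here. Your identification of the right invariant --- corresponding cells of the refinement by the selected sets and by $X$ (resp.\ $Y$) are either equal in size or both large, with already-placed points paired inside corresponding cells --- is correct, and the base case does use precisely the $r$-equivalence of both $s,t$ and $\overline{s},\overline{t}$ as you say.

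The one place where your sketch as written would not quite close is the threshold recurrence, which you yourself flag as the obstacle. If the threshold at a position with $q_S'$ set moves and $q_E'$ element moves remaining is taken to be $2^{q_S'}q_E'$ measured against the \emph{raw} cell sizes, then a ``large'' cell can be exhausted: with $q_S'=0$ and $q_E'=1$ the threshold is $1$, yet up to $q_E-1$ earlier point moves may already sit in that cell. Two standard fixes: either keep the element count fixed and use the threshold $2^{q_S'}q_E$ (which decays only at set moves, yields exactly $r=2^{q_S}q_E$ initially, and always exceeds the total number of points ever placed), or measure cell sizes after discounting already-selected points, in which case a point move shrinks both corresponding live cells by one and the invariant with $2^{q_S'}q_E'$ survives. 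In either variant you should also settle the off-by-one in the set-move step (splitting a cell of size $>2\tau'$ into two parts each of size $>\tau'$ needs size at least $2\tau'+2$, so the inequalities must be arranged as $\geq$ rather than $>$, or the constant nudged); once that bookkeeping is fixed the argument is complete and matches the cited result.
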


The last result required is the \MSOo model checking for graphs of bounded neighborhood diversity~\cite{Lam}:
\begin{theorem}
\label{thm:nd_MSO_model_checking}
Let $\psi$ be an \MSOo formula with one free vertex-set variable. There exists an \FPT algorithm
that given a graph $G = (V,E)$ of neighborhood diversity $k$  and a set $X\subseteq V$ decides whether $G \models \psi(X)$.
The running time of the algorithm is $f(k,|\psi|)n^{\bigO{1}}$.
\end{theorem}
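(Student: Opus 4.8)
The plan is to prove the statement by a Feferman--Vaught / Ehrenfeucht--Fra\"iss\'e style \emph{type computation} over the $k$ classes of neighborhood diversity, collapsing the two sources of unboundedness---the $2^n$ choices for a set quantifier and the $n$ choices for an element quantifier---down to a number of combinatorial possibilities bounded in terms of $k$ and $|\psi|$. First I would compute the neighborhood-diversity partition $N_1,\ldots,N_k$ together with the quotient data $\eta_i$ and the relation $i\sim j$ introduced before Lemma~\ref{lem:fair_cost_from_shape}; this is the only step in which the input size enters substantially, and it is polynomial.

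The core is an inductive claim on the structure of a subformula $\phi$ with free first-order variables $x_1,\dots,x_p$ and free vertex-set variables $Y_1,\dots,Y_s$: whether $G\models\phi(\bar a,\bar B)$ depends only on the \emph{refined shape} of the assignment $(\bar a,\bar B)$. By refined shape I mean the index of the class containing each $a_\ell$, together with, for every class $N_i$ and every Boolean combination of the sets $Y_1,\dots,Y_s$, the size of the corresponding cell recorded up to a threshold $r=2^{q_S}q_E$ (its exact value if at most $r$, and otherwise only the flag ``exceeds $r$''), plus the record of which $a_\ell$ lie in which sets. Atomic formulas $\adj{x_\ell}{x_m}$ and memberships $x_\ell\in Y_j$ are decided from the class indices and the recorded incidences alone, via $i\sim j$ and $\eta_i$. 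The number of refined shapes is bounded by a function of $k$, $s$, $p$ and $r$, hence by a function of $k$ and $|\psi|$.

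The indistinguishability step---that two assignments with $r$-equivalent refined shapes satisfy the same formulas of the relevant quantifier rank---is precisely the multivariate generalization of Lemma~\ref{lem:formula_and_large_shape}, and I would establish it by the same game argument: Duplicator maintains a correspondence between the cells that respects sizes up to the current threshold, answering an element move inside a large cell by any unused element of the matching cell (the budget $q_E$ of spare elements per cell guarantees this never runs out) and a set move by splitting each cell into two pieces whose sizes again agree up to the reduced threshold $2^{q_S-1}q_E$. Granting this, the algorithm evaluates $\psi(X)$ top-down: a set quantifier $\exists Y$ becomes a search over the (polynomially many) representatives of refined shapes for the new set, and an element quantifier $\exists x$ becomes a search over the at most $k\cdot 2^{s}$ types of a new element relative to the current sets. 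Since there are at most $|\psi|$ nested quantifiers, each branching into $f(k,|\psi|)$ cases, the total running time is $f(k,|\psi|)\,n^{\bigO{1}}$.

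The hard part is the inductive indistinguishability lemma for several set variables simultaneously: one must set up the thresholds so that they decrease by the factor $2$ along each set quantifier (accounting for the $2^{q_S}$ term) while keeping the slack $q_E$ needed by element quantifiers, and then verify that Duplicator's splitting strategy preserves $r$-equivalence of all $2^s$ cell sizes throughout the game. A less self-contained alternative would be to observe that $\nd(G)$ bounds the clique-width of $G$ and invoke the Courcelle--Makowsky--Rotics theorem for \MSOo on bounded clique-width, encoding the given free set $X$ by a unary predicate; this yields the same conclusion but hides the explicit shape bounds on which the rest of the paper relies.
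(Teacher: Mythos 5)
The paper does not actually prove this statement: Theorem~\ref{thm:nd_MSO_model_checking} is imported verbatim from Lampis~\cite{Lam} (it is introduced with ``The last result required is the \MSOo model checking for graphs of bounded neighborhood diversity''), just as Lemma~\ref{lem:formula_and_large_shape} is imported as a reformulation of Lemma~5 of that paper. So there is no in-paper proof to compare against; what you have written is a self-contained reconstruction, and it follows essentially the same route as the cited source: a threshold-$r$ type computation over the $k$ classes with $r=2^{q_S}q_E$, an Ehrenfeucht--Fra\"iss\'e argument in which Duplicator halves the threshold at each set move and spends one unit of the $q_E$ slack at each element move, and a top-down evaluation whose branching at each quantifier is bounded by a function of $k$ and $|\psi|$. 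In effect you are proving both cited results (the model-checking theorem and the indistinguishability Lemma~\ref{lem:formula_and_large_shape}) at once, which is more than the paper needs but is the honest way to make the argument self-contained; the clique-width detour you mention at the end is also a legitimate shortcut, at the cost of hiding the explicit shape bounds the paper's algorithm relies on.

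Two small points to repair in a full write-up. First, your ``refined shape'' records the class index of each first-order variable and its set memberships, but not the \emph{equality type} among the first-order variables; this is needed to decide the atomic formulas $x_\ell = x_m$ and, more importantly, $\adj{x_\ell}{x_m}$ when both variables land in the same clique class $N_i$ (there $\eta_i=1$ and adjacency holds iff the two elements are distinct), so the equality pattern must be part of the type and must be respected by Duplicator's element moves. Second, the branching at a set quantifier is over the $f(k,|\psi|)$ many \emph{realizable} refined shapes (realizability being checkable from the recorded cell sizes), not ``polynomially many''; the statement as written is harmless for the final bound but misstates where the dependence on $n$ enters, which is only in computing the partition $N_1,\ldots,N_k$ and the initial cell sizes.
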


We now have all the tools required to prove Theorem~\ref{thm:FPTneighbordiversity}.
\begin{proof}[Proof of Theorem~\ref{thm:FPTneighbordiversity}]
Let $\psi$ be an \MSOo formula in the input of \textsc{Fair \MSOo vertex-deletion}. Denote by $q_S$ the number
of vertex-set quantifiers in $\psi$, by $q_E$ the number of vertex-element quantifiers in $\psi$, and set $r = 2^{q_S}q_E$.

By Proposition~\ref{prop:property_depends_on_shape}, the validity of $\psi(X)$ depends only on the shape of $X$.
Let us abuse notation slightly and write $G \models \psi(s)$ when ``$X$ has shape $s$'' implies $G\models \psi(X)$. Similarly, Lemma~\ref{lem:fair_cost_from_shape} allows us to refer to the fair cost of a shape $s$.

From Lemma~\ref{lem:formula_and_large_shape} it follows that the validity of $\psi(s)$ does not depend on the choice
of an $r$-equivalence class representative. The fair cost is not same for all $r$-equivalent shapes, but since the fair cost is monotone in $s$, we can easily find the representative of the minimal fair cost.

Suppose we have to decide if there is a set of a fair cost at most $\ell$. The algorithm will proceed as follows:
For each class of $r$-equivalent shapes, pick a shape $s$ of the minimal cost, if the fair cost is at most $\ell$ and $G \models \psi(s)$,
output \texttt{true}, if no such shape is found throughout the run, output \texttt{false}.

By the previous claims, the algorithm is correct. Let us turn our attention to the running time. The number
of shapes is at most $(2r+3)^k$ by Proposition~\ref{prop:num_of_noneq_shapes}, and so it is bounded
by $f(|\psi|,k)$ for some function $f$. The \MSOo model checking runs in time $f'(|\psi|,k)n^{\bigO{1}}$ by Theorem~\ref{thm:nd_MSO_model_checking},
so the total running time is $f(|\psi|,k)f'(|\psi|,k)n^{\bigO{1}}$, so the described algorithm
is in \FPT.
\end{proof}


\subsection{FPT algorithm for  parameterization by vertex cover}
The FPT algorithm for parameterization by the size of minimum vertex cover uses the same idea.
We use the fact that every \MSOt formula can be translated to \MSOo formula --- roughly speaking,
every edge-set variable is replaced by $\vc{(G)}$ vertex-set variables.

We only sketch translation from \MSOt to \MSOo, for the proof we refer
the reader to Lemma~6 in~\cite{Lam}. Let $G = (V,E)$ be a graph
with vertex cover ${C = \{v_1,\ldots,v_k\}}$ and $F\subseteq E$ a set of edges.
We construct vertex sets $U_1,\ldots,U_k$ in the following way: if $w$ is
a vertex such that an edge in $F$ connects $w$ with $v_i$, we put $w$ into $U_i$.
It is easy to see that the sets $U_1,\ldots,U_k$ together with the vertex cover
$v_1,\ldots, v_k$ describe the set $F$.

In this way, we reduce the problem of finding a set $F$ to finding $k$-tuple
of sets $(U_1, \ldots, U_k)$. We can define shapes and classes of $r$-equivalence
in an analogous way as we did in previous section. Since the number of $r$-equivalence classes defined in this way is still bounded, we can use essentially the same algorithm:
for each class of $r$-equivalence, run a model checking on a representative of this class.
From those representatives that satisfy $\psi$, we choose the one with best fair cost.

The translation from set of edges into $k$ sets of vertices is captured by the following definition.
\begin{definition}
	Let $G = (V,E)$ be a graph with vertex cover $v_1,\ldots,v_k$. For a set $F\subseteq E$, we define
	\emph{the signature of $F$ with respect to $v_1,\ldots,v_k$} as the $k$-tuple ${\cal U} = (U_1,\ldots,U_k)$,
	where $U_i = \{ w \in V \mid \{w,v_i\} \in F\}$.
	We refer to it simply as \emph{the signature} of $F$ and denote it by $S(F)$ if the vertex cover is clear from the context.
\end{definition}

In the original problem, we had an \MSOt formula $\psi_2$ with one free edge-set variable.
By the translation, we obtain an \MSOo formula $\psi$ with $k$ free vertex-set variables
and $k$ free vertex-element variables (the vertex-element variables will describe the vertex
cover; the formula need to have access to a vertex cover and it will be useful to fix one throughout the whole run of the 
algorithm).

We start by finding a vertex cover $v_1,\ldots,v_k$ (this can be solved by an \FPT algorithm \cite{df13}).
We now want to find the sets $U_1,\ldots,U_k$ such that: $${G \models \psi(v_1,\ldots,v_k,U_1,\ldots,U_k)}.$$
To find such $k$-tuple of sets, we need to extend the notion of shapes to signatures.
\begin{definition}
	Let $G = (V,E)$ be a graph with vertex cover $v_1,\ldots,v_k$, and let ${\cal U} = (U_1,\ldots,U_k)$
	be a collection of $k$ subsets of $V$.
	Denote by $N_1,\ldots,N_\ell$ the classes of neighborhood diversity of $G$.
	For $j \in \{1,\ldots,\ell\}$ and $I \subseteq \{1 \ldots k\}$, denote by $\overline I$ the
	set $\{1,\ldots,k\} \setminus I$. Furthermore, we define $S_{\cal U}(j,I)$ as
	$$ S_{\cal U}(j,I) = \bigg|N_j \cap \bigcap_{i \in I} U_i \cap \bigcap_{i\in \overline I} (V \setminus U_i) \bigg|.$$

	The mapping $S_{\cal U}$ is called \emph{the shape of a signature $\cal U$}.
\end{definition}

The shapes defined in this way have properties similar to those defined for neighborhood diversity; we only state those
properties without proofs.

\begin{definition}
	Two shapes $S$, $S'$ are $r$-equivalent if for every $j \in \{1,\ldots,k\}$, $I \subseteq \{1,\ldots,k\}$ it holds
	that
\begin{itemize}
	\item $S(j,I) = S'(j,I)$, or
	\item both $S(j,I)$, $S'(j,I)$ are strictly greater than $r$.
\end{itemize}
\end{definition}

As in the neighborhood diversity case, the number of $r$-nonequivalent shapes is bounded by a function of $r$ and $k$.
\begin{proposition}
\label{prop:extShapeCount}
	Let $G = (V,E)$ be a graph with vertex cover $v_1,\ldots,v_k$ and denote by $\ell$
	the neighborhood diversity of $G$.
	The number of $r$-nonequivalent shapes is at most $(2r+3)^{\ell 2^k}$.
\end{proposition}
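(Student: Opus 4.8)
The plan is to mimic the counting argument of Proposition~\ref{prop:num_of_noneq_shapes}, adapted to the larger index set of the extended shapes. First I would observe that a shape $S_{\mathcal U}$ is nothing more than a function assigning a non-negative integer to each pair $(j,I)$ with $j \in \{1,\ldots,\ell\}$ and $I \subseteq \{1,\ldots,k\}$. Since there are $\ell$ choices for $j$ and $2^k$ choices for the subset $I$, the domain of this function has exactly $\ell 2^k$ elements; thus a shape is effectively a tuple with $\ell 2^k$ integer coordinates.

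Next I would analyze a single coordinate under $r$-equivalence. By definition, two shapes $S$ and $S'$ are $r$-equivalent precisely when, for every coordinate $(j,I)$, either $S(j,I)=S'(j,I)$ or both values are strictly greater than $r$. Hence, up to $r$-equivalence the entry at one coordinate is fully described by one of only $r+2$ symbols: the exact value when it lies in $\{0,1,\ldots,r\}$, together with a single wildcard symbol standing for all values strictly greater than $r$.

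Combining the two observations, the number of $r$-nonequivalent shapes is at most the number of functions from the $\ell 2^k$ coordinates into this alphabet of size $r+2$, which is $(r+2)^{\ell 2^k}$. Since $r+2 \le 2r+3$ for every non-negative integer $r$, this quantity is bounded by $(2r+3)^{\ell 2^k}$, as claimed.

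I do not expect any genuine obstacle here: the statement is a pure counting estimate, structurally identical to Proposition~\ref{prop:num_of_noneq_shapes}. The only point deserving care is the bookkeeping of the index set --- verifying that a shape is indexed by the pairs $(j,I)$ and hence has $\ell 2^k$ coordinates --- together with noting that the given notion of $r$-equivalence constrains only the direct values $S(j,I)$ (and not, as in the neighborhood-diversity case, an auxiliary complementary shape), so that $r+2$ symbols per coordinate already suffice and the stated bound of $2r+3$ per coordinate is in fact generous.
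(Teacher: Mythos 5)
Your proof is correct; the paper states this proposition without proof (remarking only that the signature shapes have properties analogous to the neighborhood-diversity case), and your argument is exactly the intended adaptation of the counting in Proposition~\ref{prop:num_of_noneq_shapes}. You are also right that, because the $r$-equivalence for signature shapes has no complementary-shape clause, only $r+2$ symbols per coordinate are needed, so you in fact obtain the slightly sharper bound $(r+2)^{\ell 2^k} \le (2r+3)^{\ell 2^k}$.
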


We now state corresponding variants of Lemma~\ref{lem:fair_cost_from_shape} and Lemma~\ref{lem:formula_and_large_shape}.
\begin{lemma}
	Let $G = (V,E)$ be a graph with a vertex cover $v_1,\ldots, v_k$ and let $F \subseteq E$.

The number of edges in $F$ incident to $v_i$ is $|U_i|$. If $w$ is a vertex different from $v_1,\ldots,v_k$,
then the number of edges in $F$ incident to $w$ is $|\{ i \mid w \in U_i \}|$.

Those quantities (and therefore the fair cost of $F$) can be determined from the shape of $S(F)$.
\end{lemma}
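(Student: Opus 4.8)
The plan is to establish the two counting formulas directly from the definition of the signature and then argue that both, and hence the fair cost, are functions of the shape. For the first formula I would observe that $U_i = \{w : \{w,v_i\} \in F\}$ sets up a bijection between $U_i$ and the edges of $F$ incident to $v_i$: each $w \in U_i$ names exactly the edge $\{w,v_i\}$, and conversely, since $G$ is simple, every edge of $F$ touching $v_i$ has a unique second endpoint, which lies in $U_i$. Hence the $F$-degree of $v_i$ equals $|U_i|$. For the second formula the key point is that $\{v_1,\ldots,v_k\}$ is a vertex cover, so its complement is an independent set; therefore every edge incident to a non-cover vertex $w$ has its other endpoint among the $v_i$. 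The edges of $F$ at $w$ are thus exactly the $\{w,v_i\}$ with $w \in U_i$, and their number is $|\{i : w \in U_i\}|$.

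Next I would express both quantities through the shape $S_{\cal U}$. Since the sets $N_j \cap \bigcap_{i\in I} U_i \cap \bigcap_{i \in \overline I}(V\setminus U_i)$ partition $V$ as $(j,I)$ ranges over $\{1,\ldots,\ell\}\times 2^{\{1,\ldots,k\}}$, restricting to the $I$ containing a fixed index $i$ partitions $U_i$ by class, giving $|U_i| = \sum_{j=1}^{\ell}\sum_{I\ni i} S_{\cal U}(j,I)$. For a non-cover vertex its incident count equals the size $|I|$ of its membership pattern $I = \{i : w \in U_i\}$, and a pattern $I$ is realized inside class $N_j$ precisely when $S_{\cal U}(j,I)\geq 1$.

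Finally, the fair cost is the maximum $F$-degree over all vertices, which I would split as $\max(M_1,M_2)$ with $M_1 = \max_i |U_i|$ and $M_2 = \max\{|I| : S_{\cal U}(j,I)\geq 1\}$, both read off the shape. The only step needing care — and the genuine obstacle — is that $M_2$ ranges over every vertex realizing a pattern, including the cover vertices $v_{i'}$, whose correct degree is $|U_{i'}|$ rather than the pattern size. I would resolve this by noting that a cover vertex's cover-neighbours in $F$ form a subset of all its $F$-neighbours, so its pattern size is at most $|U_{i'}| \leq M_1$; thus any spurious term a cover vertex contributes to $M_2$ is already dominated by $M_1$. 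Combining the two inequalities — every actual $F$-degree is at most $\max(M_1,M_2)$, and conversely each term defining $M_1$ or $M_2$ is at most the true fair cost — shows that $\max(M_1,M_2)$ equals the fair cost of $F$ and is determined by $S_{\cal U}$, which completes the proof.
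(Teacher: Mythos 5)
Your proof is correct. Note that the paper itself gives no proof of this lemma (it explicitly states these properties ``without proofs''), so there is no argument to compare against; your write-up supplies the missing details. The two counting formulas are handled exactly as one would expect (the bijection $w \mapsto \{w,v_i\}$ for cover vertices, and the independence of $V \setminus \{v_1,\ldots,v_k\}$ for the rest), and you correctly identify and resolve the one genuinely delicate point: a cell $(j,I)$ of the shape may be populated only by cover vertices, for which $|I|$ is \emph{not} the $F$-degree, so the naive $\max\{|I| : S_{\cal U}(j,I)\geq 1\}$ could a priori overshoot. Your observation that such a vertex's pattern size is bounded by its own $|U_{i'}|$, hence by $M_1$, closes that gap and shows $\max(M_1,M_2)$ is exactly the fair cost and is a function of the shape alone.
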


\begin{lemma}
	Let $G = (V,E)$ be a graph with a vertex cover $v_1,\ldots, v_k$, let
	$\psi$ be an \MSOo formula with $k$ free vertex-element variables and $k$ free vertex-set variables, and let ${\cal U} = (U_1,\ldots,U_k)$,
	${\cal W} = (W_1, \ldots, W_k)$ be two signatures. If the shapes of $\cal U$ and $\cal W$ are $r$-equivalent,
	then $G \models \psi(v_1,\ldots,v_k, U_1, \ldots, U_k)$ if and only if $G \models \psi(v_1,\ldots,v_k, W_1,\ldots,W_k)$.
\end{lemma}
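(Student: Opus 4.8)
The plan is to prove this by a Spoiler--Duplicator (Ehrenfeucht--Fra\"iss\'e) game argument for \MSOo, directly generalizing the proof of Lemma~\ref{lem:formula_and_large_shape} from a single free vertex-set variable to the collection $\mathcal U = (U_1,\dots,U_k)$ of $k$ free vertex-set variables together with the $k$ pinned free element variables $v_1,\dots,v_k$. Let $q_S$ and $q_E$ denote the numbers of set- and element-quantifiers of $\psi$, and keep $r = 2^{q_S}q_E$ as in Lemma~\ref{lem:formula_and_large_shape}. I would consider the two expansions $\mathcal A = (G, v_1,\dots,v_k, U_1,\dots,U_k)$ and $\mathcal B = (G, v_1,\dots,v_k, W_1,\dots,W_k)$ and show that the Duplicator wins the $(q_S+q_E)$-round \MSOo game on $\mathcal A$ and $\mathcal B$. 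By the standard game characterization of \MSOo equivalence this gives $\mathcal A \models \psi \iff \mathcal B \models \psi$, which is exactly the claim.

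First I would fix the relevant partition of $V$. At any stage of the game, refine $V$ by (i) the neighborhood-diversity classes $N_1,\dots,N_\ell$, (ii) membership in each of $U_1,\dots,U_k$ on the $\mathcal A$-side and in each of $W_1,\dots,W_k$ on the $\mathcal B$-side, and (iii) membership in every set that has already been chosen during the game. The initial cells are precisely the quantities counted by the shape: the cell of class $N_j$ with membership pattern $I \subseteq \{1,\dots,k\}$ has size $S_{\mathcal U}(j,I)$ in $\mathcal A$ and $S_{\mathcal W}(j,I)$ in $\mathcal B$. The invariant I would maintain is twofold: (a) the chosen elements form a partial isomorphism, agreeing on equality, adjacency, and membership in all marked sets; and (b) for each pair of corresponding cells $C$ (in $\mathcal A$) and $C'$ (in $\mathcal B$), either $|C| = |C'|$, or both $|C|$ and $|C'|$ exceed $2^{s}p$, where $s$ is the number of set-moves and $p$ the number of element-moves still to be played. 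Initially $s = q_S$ and $p = q_E$, so the threshold is $r$, and $r$-equivalence of the shapes of $\mathcal U$ and $\mathcal W$ is exactly invariant (b) at the start.

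The core of the argument is that the Duplicator can preserve this invariant under both move types. For an element-move the Spoiler picks a vertex in some cell; since corresponding cells are either equal-sized or both of size $> 2^{s}p \ge p$, there is always an unused vertex in the matching cell to answer with. Because every cell lies inside a single class $N_j$, all of its vertices are mutually interchangeable, so neighborhood diversity guarantees that the adjacency and membership relations of the answer to every previously chosen element agree, preserving (a). For a set-move the budget drops $s \to s-1$ and the threshold halves to $2^{s-1}p$; the Spoiler splits each cell $C$ into two parts. When $|C| = |C'|$ the Duplicator copies the split exactly, and when both cells are large it answers so that each resulting part is either matched in size (whenever the Spoiler's part has size at most $2^{s-1}p$) or again both-large, which is possible precisely because $|C|,|C'| > 2^{s}p = 2\cdot 2^{s-1}p$. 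This is the step that consumes the factor $2^{q_S}$ in $r$, one halving per set-move, leaving threshold $q_E \ge p$ once all set-moves are exhausted so that element-moves can always be answered.

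The main obstacle I anticipate is the careful treatment of the pinned vertex-cover vertices $v_1,\dots,v_k$, which are named by the free element variables and hence must be mapped to themselves. I would remove them from the free cells and install them as singletons in the initial partial isomorphism, refining the partition so that each $v_i$ forms its own cell; one then has to verify that their adjacency and set-membership relations genuinely agree on both sides, so that atomic formulas such as ``$v_i \in U_m$'' versus ``$v_i \in W_m$'' are never distinguishing. This singleton bookkeeping, and checking that refining the partition by the $k$ named points does not violate invariant (b), is the only place where the argument departs from a verbatim transcription of Lampis's proof; the rest is the standard \MSOo game analysis. Once the lemma is in hand, the bound on the number of $r$-nonequivalent shapes from Proposition~\ref{prop:extShapeCount} controls how many representatives the algorithm must model-check.
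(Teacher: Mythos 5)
Your game-theoretic strategy is the natural one (the paper states this lemma without proof, deferring to Lampis's argument, so an Ehrenfeucht--Fra\"iss\'e proof is exactly what is called for), and the halving-threshold analysis for set- and element-moves is fine. However, the step you single out as the ``main obstacle'' --- verifying that the pinned cover vertices have the same set-membership pattern on both sides, so that atomic formulas like $v_i \in X_m$ do not distinguish the two structures --- is not a bookkeeping issue that ``one then has to verify''; it simply does not follow from $r$-equivalence of the shapes, and in fact the lemma as stated is false. Take $G = K_4$ on $\{v_1,v_2,v_3,u\}$ with vertex cover $v_1,v_2,v_3$, and let $F = \{v_1v_2,\, uv_3\}$ and $F' = \{v_1v_3,\, uv_2\}$. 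Their signatures are $\mathcal U = (\{v_2\},\{v_1\},\{u\})$ and $\mathcal W = (\{v_3\},\{u\},\{v_1\})$. Since $K_4$ has a single neighborhood-diversity class, the shape only records how many vertices realize each membership pattern $I \subseteq \{1,2,3\}$, and in both signatures each of the patterns $\emptyset,\{1\},\{2\},\{3\}$ is realized exactly once; the shapes are identical, hence $r$-equivalent for every $r$. Yet the quantifier-free formula $\psi \equiv x_1 \in X_2$ holds for $(v_1,v_2,v_3,\mathcal U)$ and fails for $(v_1,v_2,v_3,\mathcal W)$. The shape counts \emph{how many} vertices of a class carry a given pattern but not \emph{which} ones, so it cannot control the atomic type of the named constants $v_1,\ldots,v_k$; your Duplicator loses before the first move because the initial partial map $v_i \mapsto v_i$ is not a partial isomorphism of the expanded structures.

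The repair is to strengthen the hypothesis: require additionally that $\mathcal U$ and $\mathcal W$ agree on the memberships of the cover vertices, i.e.\ $v_i \in U_m \Leftrightarrow v_i \in W_m$ for all $i,m$ (equivalently, $F$ and $F'$ delete the same cover--cover edges). Under that hypothesis your argument does go through: install each $v_i$ as a singleton cell of the partial isomorphism (this removes at most $k$ vertices from the free cells, which is absorbed by replacing $r$ with $r+k$, or by noting the affected cells either match exactly or remain large), and the rest is the standard analysis you describe. Correspondingly, the algorithm of Theorem~\ref{thm:FPTvertexCover} must enumerate, in addition to the $r$-equivalence class of the shape, the at most $2^{\binom{k}{2}}$ possible traces of $F$ on the cover; this keeps the count of representatives bounded by a function of $k$ and $|\psi|$, so the \FPT claim survives, but neither your proof nor the lemma is correct without this refinement.
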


\begin{proof}[Proof of Theorem~\ref{thm:FPTvertexCover}]
The algorithm goes as follows:
\begin{itemize}
	\item we translate the \MSOt formula $\psi_2$ with one free edge-set variable to
		the \MSOo formula $\psi$ with $k$ vertex-element variables and $k$ vertex-set variables.
	\item We find a vertex cover $c_1,\ldots, c_k$.
	\item For each class of $r$-equivalent shapes, we pick the one achieving the minimal fair cost,
		determine the signature $U_1,\ldots,U_k$ and check whether: $${G \models \psi(c_1,\ldots,c_k,U_1,\ldots,U_k)}.$$
\end{itemize}
Similarly to Theorem~\ref{thm:FPTneighbordiversity}, the algorithm is correct. Moreover, we do only bounded number (Proposition~\ref{prop:extShapeCount})
of \MSOo model checking, so the whole algorithm runs in \FPT time.
\end{proof}

	\section{
Open problems}

The main open problem is whether the bound in Theorems~\ref{thm:edge_deletion_hardness} and~\ref{thm:hardvertex} can be improved to $f(|\psi|,k)n^{\smallo{k/\log k}}$ or even to $f(|\psi|,k)n^{\smallo{k}}$.

The authors would like to thank Martin Koutecký and Petr Hliněný for helpful discussions.

\bibliographystyle{siam}
	\bibliography{src/lit}

\begin{thebibliography}{10}

\bibitem{Watanabe}
{\sc T.~Ae, T.~Watanabe, and A.~Nakamura}, {\em On the {NP}-hardness of
  edge-deletion and -contraction problems}, Discrete Applied Mathematics, 6
  (1983), pp.~63--78.

\bibitem{bb72}
{\sc U.~Bertel{\`e} and F.~Brioschi}, {\em Nonserial Dynamic Programming},
  Mathematics in science and engineering, Academic Press, 1972.

\bibitem{Courcelle:00}
{\sc B.~Courcelle, J.~A. Makowsky, and U.~Rotics}, {\em Linear time solvable
  optimization problems on graphs of bounded clique-width}, Theory of Computing
  Systems, 33 (2000), pp.~125--150.

\bibitem{CourcelleMosbah}
{\sc B.~Courcelle and M.~Mosbah}, {\em Monadic second-order evaluations on
  tree-decomposable graphs}, Theor. Comput. Sci., 109 (1993), pp.~49--82.

\bibitem{defcol}
{\sc L.~J. Cowen, R.~Cowen, and D.~R. Woodall}, {\em Defective colorings of
  graphs in surfaces: Partitions into subgraphs of bounded valency}, Journal of
  Graph Theory, 10 (1986), pp.~187--195.

\bibitem{Diestel}
{\sc R.~Diestel}, {\em Graph Theory, 4th Edition}, vol.~173 of Graduate texts
  in mathematics, Springer, 2012.

\bibitem{df13}
{\sc R.~G. Downey and M.~R. Fellows}, {\em Fundamentals of Parameterized
  Complexity}, Texts in Computer Science, Springer, 2013.

\bibitem{eq_conn_part}
{\sc R.~Enciso, M.~R. Fellows, J.~Guo, I.~A. Kanj, F.~A. Rosamond, and
  O.~Such{\'{y}}}, {\em {What Makes Equitable Connected Partition Easy}}, in
  {IWPEC} 2009, 2009, pp.~122--133.

\bibitem{eq_coloring}
{\sc M.~R. Fellows, F.~V. Fomin, D.~Lokshtanov, F.~A. Rosamond, S.~Saurabh,
  S.~Szeider, and C.~Thomassen}, {\em On the complexity of some colorful
  problems parameterized by treewidth}, in {COCOA} 2007, 2007, pp.~366--377.

\bibitem{DBLP:journals/tcs/FellowsHRV09}
{\sc M.~R. Fellows, D.~Hermelin, F.~A. Rosamond, and S.~Vialette}, {\em On the
  parameterized complexity of multiple-interval graph problems}, Theor. Comput.
  Sci., 410 (2009), pp.~53--61.

\bibitem{IPZ01:ETH}
{\sc R.~Impagliazzo, R.~Paturi, and F.~Zane}, {\em Which problems have strongly
  exponential complexity?}, Journal Comput. Syst. Sci., 63 (2001),
  pp.~512--530.

\bibitem{Kolman09onfair}
{\sc P.~Kolman, B.~Lidick{\'y}, and J.-S. Sereni}, {\em Fair edge deletion
  problems on treedecomposable graphs and improper colorings}, 2010.

\bibitem{KriDeo}
{\sc M.~S. Krishnamoorthy and N.~Deo}, {\em Node-deletion np-complete
  problems}, SIAM Journal on Computing, 8 (1979), pp.~619--625.

\bibitem{Lam}
{\sc M.~Lampis}, {\em Algorithmic meta-theorems for restrictions of treewidth},
  Algorithmica, 64 (2011), pp.~19--37.

\bibitem{Lampis:13}
{\sc M.~Lampis}, {\em Model checking lower bounds for simple graphs}, Logical
  Methods in Computer Science, 10 (2014).

\bibitem{LibkinFMT}
{\sc L.~Libkin}, {\em Elements of Finite Model Theory}, Texts in Theoretical
  Computer Science. An {EATCS} Series, Springer, 2004.

\bibitem{LiSah}
{\sc L.~Lin and S.~Sahni}, {\em Fair edge deletion problems}, IEEE Trans.
  Comput., 38 (1989), pp.~756--761.

\bibitem{LMS11:ETHLowerBoundsSurvey}
{\sc D.~Lokshtanov, D.~Marx, and S.~Saurabh}, {\em Lower bounds based on the
  exponential time hypothesis}, Bulletin of the {EATCS}, 105 (2011),
  pp.~41--72.

\bibitem{Yannakakis78}
{\sc M.~Yannakakis}, {\em Node- and edge-deletion {NP}-complete problems}, in
  {ACM} {STOC} 1978, 1978, pp.~253--264.

\bibitem{Yannakakis81}
\leavevmode\vrule height 2pt depth -1.6pt width 23pt, {\em Edge-deletion
  problems}, {SIAM} J. Comput., 10 (1981), pp.~297--309.

\end{thebibliography}
%

\end{document}